\newtheorem{theorem}{Theorem}[section]
\newtheorem{lemma}[theorem]{Lemma}
\newtheorem{proposition}[theorem]{Proposition}
\newtheorem{corollary}[theorem]{Corollary}
\newenvironment{proof}[1][Proof]{\begin{trivlist}
\item[\hskip \labelsep {\bfseries #1}]}{\end{trivlist}}
\newenvironment{definition}[1][Definition]{\begin{trivlist}
\item[\hskip \labelsep {\bfseries #1}]}{\end{trivlist}}
\newcommand{\qed}{\nobreak \ifvmode \relax \else
      \ifdim\lastskip<1.5em \hskip-\lastskip
      \hskip1.5em plus0em minus0.5em \fi \nobreak
      \vrule height0.75em width0.5em depth0.25em\fi}
\newcommand{\abs}[1]{\ensuremath{|#1|}}
\newcommand{\norm}[2]{\ensuremath{|\!|#1|\!|_{#2}}}
\newcommand{\Norm}[2]{\ensuremath{\left|\!\left|#1\right|\!\right|_{#2}}}
\newcommand{\tr}{\textnormal{tr}}
\newcommand{\trace}[1]{\ensuremath{\tr (#1)}}
\newcommand{\Trace}[1]{\ensuremath{\tr \left( #1 \right)}}
\newcommand{\ptr}[1]{\textnormal{tr}_{\textnormal{ #1}}}
\newcommand{\ptrace}[2]{\ensuremath{\ptr{#1} (#2)}}
\newcommand{\Ptrace}[2]{\ensuremath{\ptr{#1} \left(#2\right)}}
\newcommand{\idx}[2]{{#1}_{#2}}
\newcommand{\ket}[1]{| #1 \rangle}
\newcommand{\keti}[2]{| #1 \rangle_{\textnormal{ #2}}}
\newcommand{\bra}[1]{\langle #1 |}
\newcommand{\bracket}[3]{\langle #1 | #2 | #3 \rangle}
\newcommand{\proj}[2]{| #1 \rangle\!\langle #2 |}
\newcommand{\proji}[3]{| #1 \rangle\!\langle #2 |_{\textnormal{#3}}}
\newcommand{\kron}{\otimes}
\newcommand{\eps}{\varepsilon}
\newcommand{\id}{\ensuremath{\mathds{1}}}
\newcommand{\idi}[1]{\ensuremath{\mathds{1}_{#1}}}
\newcommand{\idA}{\idi{A}}
\newcommand{\opid}{\ensuremath{\mathcal{I}}}
\newcommand{\linops}[1]{\ensuremath{\mathcal{L}(#1)}}
\newcommand{\hermops}[1]{\ensuremath{\mathcal{L}^\dagger(#1)}}
\newcommand{\posops}[1]{\ensuremath{\mathcal{P}(#1)}}
\newcommand{\states}[1]{\ensuremath{\mathcal{S}(#1)}}
\newcommand{\normstates}[1]{\ensuremath{\mathcal{S}_{=}(#1)}}
\newcommand{\subnormstates}[1]{\ensuremath{\mathcal{S}_{\leq}(#1)}}
\newcommand{\cC}{\mathcal{C}}
\newcommand{\cD}{\mathcal{D}}
\newcommand{\cE}{\mathcal{E}}
\newcommand{\cH}{\mathcal{H}}
\newcommand{\cI}{\mathcal{I}}
\newcommand{\cN}{\mathcal{N}}
\newcommand{\cT}{\mathcal{T}}
\newcommand{\rhot}{\ensuremath{\tilde{\rho}}}
\newcommand{\rhoA}{\ensuremath{\idx{\rho}{A}}}
\newcommand{\rhoB}{\ensuremath{\idx{\rho}{B}}}
\newcommand{\rhoAB}{\ensuremath{\idx{\rho}{AB}}}
\newcommand{\rhoAR}{\ensuremath{\idx{\rho}{AR}}}
\newcommand{\sigmaA}{\ensuremath{\idx{\sigma}{A}}}
\newcommand{\sigmaB}{\ensuremath{\idx{\sigma}{B}}}
\newcommand{\zetaR}{\ensuremath{\idx{\zeta}{R}}}
\newcommand{\cTAE}{\ensuremath{\idx{\cT}{A\rightarrow B}}}
\newcommand{\HA}{\ensuremath{\idx{\cH}{A}}}
\newcommand{\HB}{\ensuremath{\idx{\cH}{B}}}
\newcommand{\chh}[5]{\ensuremath{H_{#1}^{#2}(\textnormal{#3}|\textnormal{#4})_{#5}}}
\newcommand{\chmin}[3]{\chh{\textnormal{min}}{}{#1}{#2}{#3}}
\newcommand{\chmineps}[3]{\chh{\textnormal{min}}{\eps}{#1}{#2}{#3}}
\newcommand{\chmineeps}[4]{\chh{\textnormal{min}}{#1}{#2}{#3}{#4}}
\newcommand{\chmaxeps}[3]{\chh{\textnormal{max}}{\eps}{#1}{#2}{#3}}
\newcommand{\EP}[1]{\ensuremath{\underset{\mathbb{P}}{\textnormal{\large{$\mathbb{E}$}}}\:#1}}
\newcommand{\ident}{\mathds{1}}
\newcommand{\Hmin}{H_{\min}}
\DeclareMathOperator{\cl}{cl}
\DeclareMathOperator{\Span}{span}
\newcommand{\mfX}{\mathfrak{X}}
\newcommand{\mbE}{\mathbb{E}}
\newcommand{\mbP}{\mathbb{P}}
\newcommand{\commentout}[1]{}
\begin{document}
%
\title{A decoupling approach to classical data transmission over quantum channels}
%
%
%

\author{Frédéric~Dupuis,
        Oleg~Szehr,
        and~Marco~Tomamichel
\thanks{
F. Dupuis is with ETH Z\"urich, 
O. Szehr is with TU M\"unchen
and M. Tomamichel is with CQT, National University of Singapore.}%
}

%
%

%

\maketitle

\begin{abstract}
Most coding theorems in quantum Shannon theory can be proven using the decoupling technique: to send data through a channel, one guarantees that the environment gets no information about it; Uhlmann's theorem then ensures that the receiver must be able to decode. While a wide range of problems can be solved this way, one of the most basic coding problems remains impervious to a direct application of this method: sending classical information through a quantum channel. We will show that this problem can, in fact, be solved using decoupling ideas, specifically by proving a ``dequantizing'' theorem, which ensures that the environment is only classically correlated with the sent data.

Our techniques naturally yield a generalization of the Holevo-Schumacher-Westmoreland Theorem to the one-shot scenario, where a quantum channel can be applied only once.
\end{abstract}

\begin{IEEEkeywords}
Coding, Decoupling, HSW Theorem, Smooth entropies.
\end{IEEEkeywords}

%
\section{Introduction}
\label{intro}
One of the most fruitful ideas that arose in quantum Shannon theory in the past few years is that of \emph{decoupling}: the fact that, in quantum mechanics, the absence of correlations between two systems implies perfect correlations of those two systems with a third one. More precisely, the core idea is as follows: suppose that we have a tripartite pure state $\ket{\rho}_{ABC}$, and that we know that the reduced state on $AB$ is a product state, i.e. $\tr_C[\proj{\rho}{\rho}] = \rho_A \otimes \rho_B$. Then, we know from the unitary equivalence of purifications that there exists a partial isometry $V_{C \rightarrow C_A C_B}$ with the property that $V \ket{\rho} = \ket{\psi}_{AC_A} \otimes \ket{\varphi}_{B C_B}$. In other words, if $A$ and $B$ are completely uncorrelated, then $C$ contains perfect correlations with both $A$ and $B$. Furthermore, this observation remains true if the state on $A$ and $B$ is only close to a product state, as can be shown via Uhlmann's theorem \cite{uhlmann}.

This observation can be used to prove coding theorems for quantum Shannon theory problems. To see this, suppose that we have a channel $\cT_{A \rightarrow B}$, with a Stinespring dilation $(U_{\mathcal{T}})_{A\rightarrow BE}$, and that we want to use this channel to send quantum information from Alice (who has access to the input $A$) to Bob (who receives the output system $B$). Let $\psi_M$ (with purification $\ket{\psi}_{MR}$) be the state of the message Alice wants to send, and let $W_{M \rightarrow A}$ be the encoding isometry she uses to map her state to the channel input. After encoding the state and sending it through the channel, we have $\varphi_{RBE} := U_{\mathcal{T}} W\psi W^\dagger U_{\mathcal{T}}^\dagger$. Now, suppose that the encoding operation is such that $\varphi_{RE} = \psi_R \otimes \varphi_E$. Then, the argument in the previous paragraph tells us that there exists an isometry $V_{B \rightarrow M E'}$ such that $V \varphi_{RBE} V^\dagger = \psi_{MR} \otimes \xi_{EE'}$ for some state $\xi$. If we then trace out $EE'$, we see that $V$ acted as a decoder to recover the initial state $\psi_{MR}$. One can also show that the condition that $R$ and $E$ be decoupled is not only sufficient but necessary in order to be able to transmit arbitrary quantum information. This simplifies our task as information theorists: as long as we can design an encoder $W$ that ensures that this decoupling condition is fulfilled, we know that a decoder must exist, and do not need to explicitly construct it. Furthermore, our aim becomes to \emph{destroy} correlations rather than to ensure their presence, which seems to be a rather less delicate task at first glance.

To enforce the decoupling condition, a number of \emph{decoupling theorems} have arisen \cite{state-merging,fqsw,fred-these,dbwr10}. The version from \cite{fred-these,dbwr10}, whose approach we will broadly follow here, goes as follows. Let $\mathcal{\bar{T}}_{A \rightarrow E}$ be a complementary channel for $\cT_{A\rightarrow B}$ and let $\rho_{AR}$ be a quantum state. We consider the state $(\bar{\mathcal{T}} \otimes \mathcal{I}_R) \big( (U_A \otimes \ident_R) \rho_{AR} (U_A^{\dagger} \otimes \ident_R) \big)$ on $ER$, where $U_A$ is chosen randomly according to the Haar measure on $\mathbb{U}(A)$. It turns out that this state is decoupled (i.e. that it is close to $\bar{\mathcal{T}}(\ident/d_A) \otimes \rho_R$ in trace distance) if the state and the channel fulfill a certain entropic criterion, namely that $\Hmin^{\varepsilon}(A|R)_{\rho} + \Hmin^{\varepsilon}(A'|E)_{\tau} \gtrsim 0$ (these \emph{smooth min-entropies} will be defined in the next section). Roughly speaking, the first term measures how hard the state $\rho_{AR}$ is to decouple, and the second term measures the ``decoupling power'' of the channel $\bar{\mathcal{T}}$; if the decoupling power of the channel exceeds the difficulty of decoupling the state, decoupling does indeed happen.

By appropriately applying the outlined procedure, one can get a variety of coding theorems. This general approach has now become a staple of quantum Shannon theory, and has been used in quantum state merging \cite{state-merging}, state transfer (also known as ``Fully Quantum Slepian-Wolf'') \cite{fqsw}, for sending quantum information through quantum channels \cite{lsd-decoupling}, for quantum broadcast channels \cite{dhl09}, quantum channels with side-information \cite{gpquantique}, among other examples.

The common point in all of the previous papers is that they use this argument to send \emph{quantum} information. For sending classical information, on the other hand, the argument does not work directly. The reason for this is that if one sends classical information, the channel environment (the system $E$ above) can also receive a copy of the message without impairing the protocol. However, it turns out that for the protocol to work, $E$  can \emph{only} share classical correlations with the message; in particular, $E$ cannot contain any phase information about the message, otherwise Bob cannot decode. Hence, while the vast majority of quantum Shannon theory can now be done using decoupling methods, classical coding over quantum channels, the so-called Holevo-Schumacher-Westmoreland theorem  (HSW theorem) \cite{holevo98,SW97}, remains a notable outlier. The purpose of this paper is to close this gap and provide a decoupling proof, based on the above argument, of the HSW theorem.

The results presented here have a somewhat similar flavor to those presented in \cite{rr11}, but a rather different emphasis. In both papers, the idea that the environment cannot have information about the phase of the classical message arises as a central theme. In \cite{rr11}, this occurs in the context of using complementary bases to get coding theorems from privacy amplification and information reconciliation, whereas here it arises as a natural analog of the concept of decoupling.

The paper will be structured as follows. Section \ref{prel} will explain the notation and basic concepts needed for this paper, Section \ref{sec:dequantizing-thm} will give a \emph{dequantizing theorem}, which will be the analog of the decoupling theorem that we will need for the classical case, and Section \ref{sec:proof-hsw} will show how to use it to derive coding theorems for sending classical information over quantum channels. Finally, we discuss the results in Section \ref{sec:discussion}.

\section{Preliminaries and Notation}
\label{prel}

\subsection{Quantum States and Maps}
\label{prel:not}
Let $\cH$ be a finite dimensional, complex Hilbert space. The set of linear operators on $\cH$ will be denoted by $\linops{\cH}$, the set of Hermitian operators by $\hermops{\cH}$ and the set of positive-semidefinite operators is given by $\posops{\cH}$. The set of quantum states is given by $\normstates{\cH} := \{\rho\in\posops{\cH} \mid \tr\, \rho = 1\}$ and the set of subnormalized quantum states is $\subnormstates{\cH} := \{\rho\in\posops{\cH} \mid \tr\,\rho \leq 1\}$. A subscript letter following some mathematical object denotes the physical system to which it belongs. However, when it is clear which systems are described we might drop the subscripts to shorten the notation. Given two physical systems $A$ and $B$, the joint bipartite system $AB$ is represented by a tensor product space $\idx{\cH}{A}\otimes\idx{\cH}{B}=:\idx{\cH}{AB}$.

We will denote by $\idi{A}$ the identity operator on $\HA$ and by $\idx{\pi}{A}:= \idi{A} / \idx{d}{A}$ the completely mixed state on $A$, where $d_A = \dim \cH_A$. For $\idx{d}{A}\geq\idx{d}{B}$ the states $\idx{T}{AB}:=\frac{1}{\idx{d}{B}}\sum_{i}^{\idx{d}{B}}{\proji{i}{i}{A}\otimes\proji{i}{i}{B}}$ and $\idx{\Phi}{AB}:=\frac{1}{\idx{d}{B}}\sum_{i,j}^{\idx{d}{B}}{\proji{i}{j}{A}\otimes\proji{i}{j}{B}}$ in $\normstates{\cH_{AB}}$ represent maximal classical and, respectively, quantum correlations between the systems $A$ and $B$.

Suppose $\ket{\psi}_{AB}$ is a pure state of the bipartite system $AB$ (i.e. the system is in the state $\psi_{AB}=\proji{\psi}{\psi}{AB}$) and $d_A\geq d_B$. Then there exist lists of orthonormal vectors $\{\ket{i}_A\}_{i=1,...,d_B}\in\cH_A$ and $\{\ket{i}_B\}_{i=1,...,d_B}\in\cH_B$ such that $\ket{\psi}_{AB}=\sum_i\lambda_i\ket{i}_A\ket{i}_B$, where $\lambda_i\geq0$ and $\sum_i\lambda_i^2=1$ \cite{NC2000}. The corresponding basis $\{\ket{i}\}_{i=1,...,d_B}$ is called \emph{Schmidt basis} and the numbers $\lambda_i$ are \emph{Schmidt coefficients}.

A quantum state $\rho_{AB}\in\subnormstates{\cH_{AB}}$ is said to be classical with respect to a fixed basis $\{\ket{i}\}_{i=1,...,d_A}$ of $\cH_A$ if $\rho_{AB}\in\textnormal{span}_{\mathbb{R}}\{\proj{i}{i}_{i=1,...,d_A}\}\otimes\hermops{\cH_B}$. If in addition $\rho_{AB}$ is not classical on $\cH_B$ we call it a hybrid classical-quantum or shortly \emph{CQ-state}. 
Moreover, we call a state $\rho_{XX'B} \in \mathcal{S}_{\leq}
(\mathcal{H}_{XX'B})$ \emph{coherent classical} on $X$ and $X'$ if it commutes with the projector $P_{XX'} = \sum_x \proj{x}{x}_{X} \otimes \proj{x}{x}_{X'}$.

Linear maps from $\linops{\idx{\cH}{A}}$ to $\linops{\idx{\cH}{B}}$ will be denoted by calligraphic letters, e.g.~$\idx{\cT}{A\ensuremath{\rightarrow}B}\in\textnormal{Hom}(\linops{\HA},\linops{\HB})$. Quantum operations are in one-to-one correspondence with trace preserving  completely positive maps (TPCPMs). The TPCPM we will encounter most often is the partial trace (over the system $B$), denoted $\ptrace{B}{\cdot}$, which is defined to be the adjoint mapping of $\idx{\mathcal{T}}{A\rightarrow AB}(\idx{\xi}{A}) = \idx{\xi}{A} \otimes\idi{B}$ for $\idx{\xi}{A}\in\hermops{\HA}$ with respect to the Schmidt scalar product 
$\langle A,B\rangle \ := \ \tr(A^{\dagger}B)$. This means $\tr( (\idx{\xi}{A}\otimes\idi{B})\idx{\zeta}{AB}) = \tr(\idx{\xi}{A}\: \ptrace{B}{\idx{\zeta}{AB}})$ for any $\idx{\zeta}{AB}\in\hermops{\idx{\cH}{AB}}$.
Given a bipartite state $\idx{\xi}{AB}$, we write $\idx{\xi}{A} := \ptr{B}{\idx{\xi}{AB}}$ for the reduced density operator on $A$ and $\idx{\xi}{B} := \ptr{A} \xi_{AB}$, respectively, on $B$. If $\idx{\xi}{AB}$ is pure, we call $\keti{\xi}{AB}$ a purification for $\idx{\xi}{A}$ and $\idx{\xi}{B}$.

The map $\cC(\cdot)_{A}=\sum_i \proji{i}{i}{A}(\cdot)\proji{i}{i}{A}$ classicalizes an arbitrary density operator on $A$ by removing all off-diagonal elements. When $\cC$ is applied to part of a bipartite state $\rho_{AB}$, we get the CQ state $\rho_{AB}^{\cl} := (\cC_A \otimes \cI_B)(\rho_{AB})$.  Here, $\idx{\opid}{B}$ denotes the operator identity on $B$, which we will only write explicitly if it is not clear from the context.

The Choi-Jamiołkowski representation~\cite{cj-choi,cj-jamiolkowski} of $\cTAE\in\textnormal{Hom}(\linops{\HA},\linops{\HB})$ is given by the operator ${\idx{\omega}{A'B}:=(\cTAE\otimes\idx{\opid}{A'})(\idx{\Phi}{AA'})}$, where $\cH_{A'}$ is a copy of $\cH_{A}$. We say that $\cTAE\in\textnormal{Hom}(\linops{\HA},\linops{\HB})$ has classical-quantum (CQ) structure if its Choi-Jamiołkowski representation is a CQ-state. For a map $\cT$ with Choi-Jamiołkowski $\omega_{A'B}$ we define the map $\cT^{\cl}$ to be the unique map whose Choi-Jamiołkowski representation is $\omega_{A'B}^{\cl} = \cC_{A'}(\omega_{A'B})$.


In our context it will also be important to purify quantum channels. Given a TPCPM $\cTAE\in\textnormal{Hom}(\linops{\HA},\linops{\HB})$, we define the unitary $(U_{\cT})_{A \rightarrow BE}$ to be any particular Stinespring dilation of $\cT$. The purifying system $E$ will be called the \emph{environment} of the channel. For a channel $\cT_{A \rightarrow B}$, we define the complementary channel $\bar{\cT}_{A \rightarrow E}: X \mapsto \tr_B[U_{\cT} X (U_{\cT})^{\dagger}]$ to be the channel to the environment. 

The purification of a CQ-channel $\cT$ with $\cT(\xi_A)=\sum_i{\Trace{\proji{i}{i}{A}\xi_A}}\rho_B^{[i]}$ is given by $U^{\cT}\keti{i}{A}=\keti{i}{X}\otimes\keti{\rho^{[i]}}{BE\rq{}}$. Thus, the environment of such a channel can conceptually be split into two parts: a register $X$, which contains a copy of the input to the channel and a system $E\rq{}$ which stems from the purification of the operators $\rho_B^{[i]}$. See Figure \ref{fig:purification-cq-channel} for an illustration of this. The Choi-Jamiołkowski representation of a complementary channel of a CQ-channel can be written as $\omega_{A\rq{}E\rq{}X}$, where the systems $X$ and $A\rq{}$ are classically coherent. Furthermore, we will frequently be considering channels that are complementary to CQ channels; we will call such channels ``complementary CQ channels''.

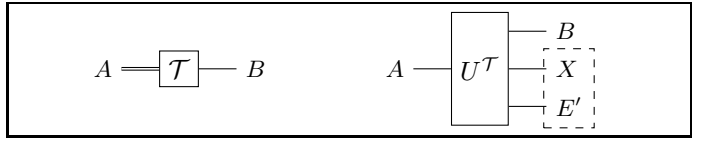
\begin{figure}
	\centering
		\begin{tikzpicture}
			\tikzstyle{gate} = [fill=white, draw]
			\tikzstyle{syslabel} = [font=\small]
			\node[gate] at (-4,0) (origchan) {$\mathcal{T}$};
			\draw[double] (origchan.west) -- ++(-.5,0) node[syslabel,left] {$A$};
			\draw (origchan.east) -- ++(.5,0) node[syslabel,right] {$B$};
			\node[gate,minimum height=1.5cm] at (0,0) (channel) {$U^{\mathcal{T}}$};
			\draw (channel.west) -- ++(-.5cm,0) node[syslabel,left] {$A$};
			\draw (channel.east)++(0,.5) -- ++(.5,0) node[syslabel,right] {$B$};
			\draw (channel.east) -- ++(.5,0) node[syslabel,right] (Xlabel) {$X$};
			\draw (channel.east)++(0,-.5) -- ++(.5,0) node[syslabel,right] (Elabel) {$E\rq{}$};
			\node[draw,dashed,fit=(Xlabel) (Elabel), inner sep=1pt] (dashedbox) {};
		\end{tikzpicture}
\caption{Diagram illustrating the purification of a CQ-channel. The environment (depicted with a dashed box) of a CQ-channel can be split into two parts, a register $X$ that contains a copy of the input and a system $E\rq{}$. }
\label{fig:purification-cq-channel}
\end{figure}

The swap operator $F_{AA'}$ acting on the bipartite space $\cH_{AA'}$ is given by $F_{AA'} := \sum_{i,j} \proji{i}{j}{A} \otimes \proji{j}{i}{A'}$.
It is easy to verify that for any $M_A$, $N_{A'}\in \linops{\cH_A}$ the swap operator satisfies
$\tr(M_A N_{A'}) \ = \ \tr\big((M_A \otimes N_{A'})F_{AA'}\big)$.

For any operator in $\xi_A\in\linops{\cH_A}$ we denote by $\norm{\xi_A}{1}$ and  $\norm{\xi_A}{2}$ the Schatten 1 and 2-norms of $\xi_A$, respectively. These norms are unitarily invariant and satisfy $\norm{\xi_A}{2}\leq\norm{\xi_A}{1}\leq\sqrt{d_A}\norm{\xi_A}{2}$.
The metric induced on $\linops{\cH}$ via the Schatten 1-norm is $D(\rho,\sigma):= \| \rho - \sigma \|_1$.
Another measure of closeness between states on $\posops{\cH}$ is the fidelity, $F(\rho,\sigma) := \|\sqrt{\rho} \sqrt{\sigma} \|_1$.

\subsection{Permutation operators}

The symmetric group $S_d$ is the set of all bijective maps of $\{1,...,d\}$ to itself together with the concatenation of maps as the group multiplication. Elements $\pi$ of $S_d$ are called permutations.
Let $\cH$ be a Hilbert space together with a fixed basis $\{\ket{i}\}_{i=1,...,d}$. For $\pi\in S_d$, we define the permutation operator $P(\pi)$ on $\cH$ such that $P(\pi) \ket{i} = \ket{\pi(i)}$. The group of all such matrices will be denoted by $\mathbb{P}$. Typically in this paper $\{ \ket{i} \}_{i=1,...,d}$ will be the Schmidt basis of a given density matrix. The above permutation matrices then act by reordering the elements of this basis.

Given a random variable $X:\mathbb{P}\rightarrow \Omega$ ($\Omega$ some measurable space),
we shall write $\mathbb{E}_{\mathbb{P}}{[X]}:=\frac{1}{d!}\sum_{P\in\mathbb{P}} X(P)$ for the expectation value of $X$ with respect to the uniform probability distribution on $\mathbb{P}$.

\subsection{Smooth entropies}
\label{prel:smoo}
Entropies are used to quantify the uncertainty an observer has about a quantum state. Moreover, conditional entropies quantify the uncertainty of an observer about one subsystem of a bipartite state when he has access to another subsystem. 
The most commonly used quantity is the von Neumann entropy.
Given a state $\rhoAB \in \normstates{\cH_{AB}}$, we denote by $H(A|B)_{\rho} := H(\rhoAB) - H(\rhoB)$ the von Neumann entropy of $A$ conditioned on $B$, where $H(\rho) := - \tr \big( \rho \log \rho \big)$.

While the von Neumann entropy is appropriate for analyzing processes involving a large number of copies of an identical system, the min-entropy is relevant when a single system is considered~\cite{Renner:PHD}.

\begin{definition}[Min-Entropy \cite{Renner:PHD}]\label{conditionalminentropy}
Let $\rhoAB\in\subnormstates{\idx{\cH}{AB}}$, then the min-entropy of $A$ conditioned on $B$ of $\rhoAB$ is defined as
$$\chmin{A}{B}{\rho}\::=\: \max_{\idx{\sigma}{B}\in\normstates{\HB}}\sup\{\lambda\in\mathbb{R}\mid\rhoAB\leq2^{-\lambda}\idA\otimes\sigmaB\}.$$
\end{definition}
More generally, the smooth conditional min-entropy is defined as the largest conditional min-entropy one can get within a distance of at most $\varepsilon$ from $\rho$. Here closeness is measured with respect to the \emph{purified distance}, $P(\rho,\sigma)$, which is defined as~\cite{duality-min-max-entropy}
$$P(\rho,\sigma):= \sqrt{1-\bar{F}(\rho,\sigma)^2},$$
where $\bar{F}(\rho,\sigma)$ is the \emph{generalized fidelity};
$\bar{F}(\rho,\sigma):= F(\rho,\sigma)+\sqrt{(1-\tr\,\rho)(1-\tr\,\sigma)}$ for $\rho,\sigma\in\subnormstates{\cH}$. The purified distance constitutes a 
metric~\cite{duality-min-max-entropy} on $\subnormstates{\cH}$ and satisfies the Fuchs-van de Graaf inequalities
\begin{align}
&\frac{1}{2}\left\|\rho-\sigma\right\|_1+\frac{1}{2}\abs{\tr\,\rho-\tr\,\sigma}\leq P(\rho,\sigma)\nonumber\\
&\leq\sqrt{\left\|\rho-\sigma\right\|_1+\abs{\tr\,\rho-\tr\,\sigma}}\label{lem:genfuchs}.
\end{align}
We say that $\rho$ is $\eps$-close to $\tilde{\rho}$, denoted $\tilde{\rho}\approx_{\eps}\rho$, if $P(\rho,\tilde{\rho})\leq\varepsilon$.
\begin{definition}[Smooth Min-Entropy \cite{Renner:PHD,duality-min-max-entropy}]\label{def:smooth-min-entropy}
Let $\varepsilon\geq0$ and let $\rhoAB\in\subnormstates{\idx{\cH}{AB}}$ with $\sqrt{\tr{\rho}}>\varepsilon$, then the \textit{$\varepsilon$-smooth\ min-entropy} of $A$ conditioned on $B$ of $\rhoAB$ is defined as
$$\chmineps{A}{B}{\rho}\:\:=\:\max_{\tilde{\rho}}\chmin{A}{B}{\tilde{\rho}},$$
where we maximize over all $\tilde{\rho}\approx_{\eps}\rho$.
\end{definition}
Next, we define the smooth max-entropy.
\begin{definition}[Smooth Max-Entropy \cite{Renner:PHD,duality-min-max-entropy}]\label{def:smooth-max-entropy}
Let $\varepsilon\geq0$, let $\rhoAB\in\subnormstates{\idx{\cH}{AB}}$ and let
$\rho_{ABC}\in\subnormstates{\idx{\cH}{ABC}}$ be an arbitrary purification of $\rhoAB$. The \textit{$\varepsilon$-smooth\ max-entropy} of $A$ conditioned on $B$ of $\rhoAB$ is defined as
$$\chmaxeps{A}{B}{\rho}\:\:=\:-\chmineps{A}{C}{\rho}.$$
\end{definition}
The fully quantum asymptotic equipartition property (QAEP) states that in the limit of an infinite number of identical states the smooth min- and max-entropies converge to the von Neumann entropy~\cite{tcr08,marcothesis}: Let $\rhoAB\in\normstates{\cH_{AB}}$, then
\begin{align}
\chh{}{}{A}{B}{\rho}&=\lim_{n\rightarrow\infty}\frac{1}{n} H_{\textnormal{min}}^\varepsilon({A^n}|{B^n})_{\rho^{\otimes n}}\nonumber\\
&=\lim_{n\rightarrow\infty}\frac{1}{n}H_{\textnormal{max}}^\varepsilon({A^n}|{B^n})_{\rho^{\otimes n}}
\label{QAEP}.
\end{align}
In that sense, the smooth conditional min- and max-entropies can be seen as one-shot generalizations of the von Neumann entropy.
\subsection{Uhlmann's theorem and existence of a decoding operation}

To prove a coding theorem it is necessary to establish the existence of a decoding operation. That is, given a quantum state that results from the execution of some quantum channel, we would like to recover the message originally encoded into the input of the channel.
It turns out that this can be achieved if and only if the environment of the channel and some reference system purifying the original message are left uncorrelated after the execution of the channel. In this situation the existence of a decoding operation follows from Uhlmann's Theorem \cite{uhlmann}, which we shall state here for completeness.

\begin{theorem}[Uhlmann's Theorem]
Let $\rho_A,\sigma_A\in\states{\cH_A}$ be two quantum states with respective purifications $\keti{\phi}{\!AB}$ and $\keti{\psi}{\!AC}$. Then,
\begin{align*}
F(\rho_A,\sigma_A)=\max_{V_{B\rightarrow C}}\abs{\bra{\psi}V\ket{\phi}},
\end{align*}
where the maximization goes over all partial isometries from $B$ to $C$.
\end{theorem}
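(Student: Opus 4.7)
The plan is to reduce the statement to an overlap of canonical purifications on a common Hilbert space and then to exploit the unitary freedom in the choice of purification. First I would invoke the unitary equivalence of purifications to embed $\cH_B$ and $\cH_C$ into a common auxiliary space $\cH_{A'}$ of sufficient dimension. Letting $\ket{\Omega_\rho}_{AA'} := (\sqrt{\rho_A}\otimes\id_{A'})\sum_i \ket{i}_A\ket{i}_{A'}$ and defining $\ket{\Omega_\sigma}_{AA'}$ analogously, these are purifications of $\rho_A$ and $\sigma_A$ respectively. Every purification $\ket{\phi}_{AB}$ of $\rho_A$ arises as $(\id_A \otimes W_1)\ket{\Omega_\rho}$ for some partial isometry $W_1$, and similarly $\ket{\psi}_{AC} = (\id_A \otimes W_2)\ket{\Omega_\sigma}$. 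A partial isometry $V_{B\to C}$ then induces a contraction $W := W_2^\dagger V W_1$ on $\cH_{A'}$, and the overlap becomes
\[
\bracket{\psi}{V}{\phi} = \bra{\Omega_\sigma}(\id_A \otimes W)\ket{\Omega_\rho}.
\]
Conversely, every contraction $W$ on $\cH_{A'}$ is obtained this way for an appropriate choice of $V$.

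Next I would use the ``ricochet'' identity $(M\otimes \id)\sum_i\ket{i}\ket{i} = (\id \otimes M^T)\sum_i \ket{i}\ket{i}$ to rewrite the overlap as $\tr(\sqrt{\sigma_A}\sqrt{\rho_A}W^T)$, reducing the problem to maximizing $|\tr(\sqrt{\sigma_A}\sqrt{\rho_A}W^T)|$ over contractions $W$. The upper bound $\fid{\rho_A}{\sigma_A}$ then follows from H\"older's inequality $|\tr(XY)|\leq \norm{X}{1}\norm{Y}{\infty}$ together with $\norm{W^T}{\infty}=\norm{W}{\infty}\leq 1$, since $\norm{\sqrt{\sigma_A}\sqrt{\rho_A}}{1} = \fid{\rho_A}{\sigma_A}$ by definition.

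To saturate the bound I would pick $W$ from the polar decomposition $\sqrt{\sigma_A}\sqrt{\rho_A} = U|\sqrt{\sigma_A}\sqrt{\rho_A}|$: choosing $W^T = U^\dagger$ (extended to a partial isometry on all of $\cH_{A'}$) yields $\tr(\sqrt{\sigma_A}\sqrt{\rho_A}U^\dagger) = \tr|\sqrt{\sigma_A}\sqrt{\rho_A}| = \fid{\rho_A}{\sigma_A}$, and the corresponding $V$ is then recovered by inverting $W = W_2^\dagger V W_1$ on the relevant supports. The main technical point to be careful about is that $\cH_B$ and $\cH_C$ need not have the same dimension, so the optimizer is genuinely only a partial isometry (rather than a unitary) between these spaces; this is precisely the reason the statement is phrased in terms of partial isometries, and the bookkeeping of extending $U^\dagger$ across the embeddings $W_1, W_2$ into $\cH_{A'}$ is the one place where care is required.
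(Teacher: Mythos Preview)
The paper does not actually prove Uhlmann's theorem; it is stated as a known result and attributed to \cite{uhlmann}, with no proof given. Your sketch is the standard argument (canonical purifications, the transpose trick, H\"older's inequality, and the polar decomposition for attainment) and is correct, so there is nothing to compare against in the paper itself.
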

Since our decoupling results involve the Schatten 1-norm rather than the Fidelity it will be useful to transform the above theorem into a statement formulated in terms of Schatten 1-norms. The following Corollary \cite{DHW05} follows from Uhlmann's Theorem with an application of the Fuchs van de Graaf Inequalities (cf. Equation~\eqref{lem:genfuchs}).
\begin{corollary}\label{uhlcor}
Let $\rho_{AB},\:\sigma_{AB}\in\states{\cH_{AB}}$ be pure quantum states and assume that $\Norm{\rhoA-\sigmaA}{1}\leq\varepsilon$. Then there exists some isometry $U_{B\rightarrow C}$ such that $\norm{U \rhoAB\:U^\dagger-\sigma_{AB}}{1}\leq2\sqrt{\varepsilon}$.
\end{corollary}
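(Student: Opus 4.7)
The plan is to convert the hypothesis on the marginals into a fidelity bound, apply Uhlmann's theorem to get an isometry that makes the purifications close in fidelity, and then convert that back into a trace-norm bound. Each step uses one direction of the Fuchs--van de Graaf inequalities in \eqref{lem:genfuchs}, with Uhlmann bridging the two.

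First, I would apply the upper bound in \eqref{lem:genfuchs} to the marginals $\rho_A$ and $\sigma_A$. Since both states are normalized, the trace-difference term vanishes and $\bar F = F$, so
\begin{equation*}
\sqrt{1 - F(\rho_A,\sigma_A)^2} \;=\; P(\rho_A,\sigma_A) \;\leq\; \sqrt{\|\rho_A - \sigma_A\|_1} \;\leq\; \sqrt{\varepsilon}.
\end{equation*}
In particular $F(\rho_A,\sigma_A)^2 \geq 1 - \varepsilon$. Next, let $\ket{\rho}_{AB}$ and $\ket{\sigma}_{AC}$ denote the given purifications; Uhlmann's theorem produces a partial isometry $U_{B\to C}$ such that $|\bracket{\sigma}{U}{\rho}| = F(\rho_A,\sigma_A)$. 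Because $U\ket{\rho}$ and $\ket{\sigma}$ are both pure (on $AC$), the fidelity between the pure states $U\rho_{AB}U^\dagger$ and $\sigma_{AC}$ equals $|\bracket{\sigma}{U}{\rho}|$, so $F(U\rho_{AB}U^\dagger,\sigma_{AC})^2 \geq 1 - \varepsilon$.

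Finally, I would apply the lower bound in \eqref{lem:genfuchs} to these two normalized states, obtaining
\begin{equation*}
\tfrac{1}{2}\|U\rho_{AB} U^\dagger - \sigma_{AC}\|_1 \;\leq\; P(U\rho_{AB}U^\dagger,\sigma_{AC}) \;=\; \sqrt{1 - F^2} \;\leq\; \sqrt{\varepsilon},
\end{equation*}
which rearranges to the claimed bound $\|U\rho_{AB}U^\dagger - \sigma_{AC}\|_1 \leq 2\sqrt{\varepsilon}$.

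There is no real obstacle here; the argument is a routine chaining of the two Fuchs--van de Graaf estimates through Uhlmann. The only minor subtlety is notational: one should keep in mind that the ``$B$'' system of $\sigma$ may have different dimension from that of $\rho$ (hence the label $C$ for the target of the isometry), but this is exactly why Uhlmann is stated with a partial isometry $V_{B\to C}$ in the theorem.
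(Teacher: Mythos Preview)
Your proposal is correct and matches the paper's approach exactly: the paper does not give a detailed proof but simply states that the corollary ``follows from Uhlmann's Theorem with an application of the Fuchs--van de Graaf Inequalities,'' which is precisely the chain of estimates you wrote out. The only cosmetic point is that Uhlmann, as stated, yields a \emph{partial} isometry while the corollary asserts an isometry; since the optimal $U$ can be taken isometric on the support of $\rho_B$ and then extended, this is harmless, and your closing remark already flags the related $B$ versus $C$ labeling issue.
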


\section{Dequantizing Theorem}\label{sec:dequantizing-thm}

In this section, we will derive the dequantizing theorem which will be the core technical ingredient for our coding theorems. Our aim will be to derive conditions under which the output of a channel contains only classical correlations with a reference system. More precisely, we will prove the following:

\begin{theorem}[Dequantizing Theorem]\label{thm:dequantizing}
	Let $\cT_{A\rightarrow EX}$ be a complementary CQ channel, and let $\omega_{A'EX}\in\subnormstates{\cH_{A'EX}}$ be its Choi-Jamiołkowski representation. Let $\rho_{AR}$ be a pure state on $\cH_{AR}$, and let $\rho_{AR}^{\cl} := \cC_A(\rho_{AR})$. Then
\begin{multline}
	\mbE_{P_A \in \mbP}{\Norm{\mathcal{T} \left(\idx{P}{A} (\idx{\rho}{AR} - \idx{\rho}{AR}^{\cl})\ \idx{P}{A}^\dagger \right)}{1}}\\
\leq\sqrt{\frac{1}{\idx{d}{A}-1}\:2^{-\chmin{A'}{EX}{\omega}-\chmin{A}{R}{\rho}}},
\end{multline}
where the permutation operators act by permuting the Schmidt-basis vectors of $\rhoAR$.
\end{theorem}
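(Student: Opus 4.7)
I would follow the structure of the standard decoupling proof from \cite{dbwr10,fred-these}, replacing the Haar-random unitary twirl by a uniform average over the permutation group $\mbP$ acting on the Schmidt basis of $\rho_{AR}$. The main ingredients are the swap trick $\|Y\|_2^2=\tr[F(Y\otimes Y)]$ for Hermitian $Y$, the Choi--Jamio{\l}kowski representation of $\cT$, and the min-entropy characterization $\rho_{AB}\leq 2^{-\chmin{A}{B}{\rho}}\,\idA\otimes\zeta_B$ (with $\zeta_B$ the min-entropy optimizer).

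\textbf{Steps 1--2: Reduction to a 2-norm trace.} Let $\sigma_{AR}:=\rho_{AR}-\rho_{AR}^{\cl}$, which in the Schmidt basis $\{\ket{i}\}$ of $\rho_{AR}$ reads $\sigma_{AR}=\sum_{i\ne j}\lambda_i\lambda_j\,\proj{i}{j}_A\otimes\proj{i}{j}_R$. Let $\zeta_R$ and $\zeta_{EX}$ be the optimizers of $\chmin{A}{R}{\rho}$ and $\chmin{A'}{EX}{\omega}$. Jensen's inequality followed by the weighted bound $\|Z\|_1\leq\sqrt{\tr\zeta}\,\|\zeta^{-1/4}Z\zeta^{-1/4}\|_2$ applied with $\zeta=\zeta_R\otimes\zeta_{EX}$ (so $\tr\zeta=1$) reduces the problem to controlling $\mbE_\pi\|(\zeta_R\otimes\zeta_{EX})^{-1/4}(\cT\otimes\cI_R)(P_\pi\sigma P_\pi^\dagger)(\zeta_R\otimes\zeta_{EX})^{-1/4}\|_2^2$. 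Applying the swap trick on two copies of $EXR$, writing $F_{EXR,E'X'R'}=F_{EXE'X'}\otimes F_{RR'}$, and pushing both copies of $\cT$ onto their Choi operator $\omega_{A'EX}$ (producing an overall $d_A^2$ prefactor) recasts the expectation as a trace over $AA'RR'EXE'X'$ whose only $\pi$-dependence is through the twirl
\[
\tau_{AA'RR'}:=\mbE_\pi\,(P_\pi\otimes P_\pi)_{AA'}\,(\sigma_{AR}\otimes\sigma_{A'R'})\,(P_\pi^\dagger\otimes P_\pi^\dagger)_{AA'}.
\]

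\textbf{Step 3: The permutation twirl.} This is the technical heart. Expanding,
\[
\sigma_{AR}\otimes\sigma_{A'R'}=\!\!\sum_{\substack{i\ne j\\k\ne l}}\!\!\lambda_i\lambda_j\lambda_k\lambda_l\,(\proj{i}{j}_A\otimes\proj{k}{l}_{A'})\otimes(\proj{i}{j}_R\otimes\proj{k}{l}_{R'}),
\]
the $AA'$-twirl produces $\mbE_\pi\,\proj{\pi(i)}{\pi(j)}_A\otimes\proj{\pi(k)}{\pi(l)}_{A'}$, whose value depends only on the equality pattern among $(i,j,k,l)$. Using $|\{\pi\in\mbP:\pi\text{ extends a partial bijection of size }m\}|=(d_A-m)!$, a case analysis over the patterns permitted by $i\ne j,\,k\ne l$ (exactly one coincidence, both coincidences, or all distinct) reduces $\tau$ to a linear combination with coefficients of order $1/(d_A(d_A-1))$ of operators such as $\sum_{a\ne b}\proj{a}{b}_A\otimes\proj{a}{b}_{A'}$ and $\sum_{a\ne b}\proj{a}{b}_A\otimes\proj{b}{a}_{A'}$ on $AA'$, tensored with corresponding $RR'$-operators expressible solely through the Schmidt coefficients and $F_{RR'}$.

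\textbf{Step 4 and main obstacle.} Inserting $\tau$ back, the trace factors into an $A'EX$-piece and an $AR$-piece. By the choice of $\zeta_{EX}$ the first is bounded by $2^{-\chmin{A'}{EX}{\omega}}$, and by a second swap-trick application together with $\rho_{AR}\le 2^{-\chmin{A}{R}{\rho}}\,\idA\otimes\zeta_R$ the second is bounded by $2^{-\chmin{A}{R}{\rho}}$. The dimension factors (the $d_A^2$ from Choi--Jamio{\l}kowski and the $\sim 1/(d_A(d_A-1))$ from the twirl) collapse to $1/(d_A-1)$ after the cancellations between the different equality-pattern contributions; taking a square root yields the claimed inequality. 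I expect the main obstacle to be Step~3: whereas the Haar-unitary twirl lies in the two-dimensional algebra spanned by $\{\idA,F_{AA'}\}$ by Schur--Weyl, the $S_{d_A}$-twirl has a much larger commutant, indexed by equality patterns of four indices, and requires careful combinatorial bookkeeping. The saving grace is that $\sigma$ is strictly off-diagonal in the Schmidt basis, which kills all patterns with a repeated diagonal coincidence and collapses $\tau$ into a tractable form closely analogous to the unitary case, up to the $1/(d_A-1)$ correction.
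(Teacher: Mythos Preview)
Your Steps~1--3 are essentially sound and, once you observe that the $F_{RR'}$ contraction in the swap trick forces $(k,l)=(j,i)$, the combinatorics of Step~3 collapse to the single case $\mbE_P\big[P^{\otimes 2}(\proj{i}{j}\otimes\proj{j}{i})(P^\dagger)^{\otimes 2}\big]=\frac{1}{d_A(d_A-1)}(F_{AA'}-d_A T_{AA'})$---there are no ``one coincidence'' or ``all distinct'' patterns left, and hence no further cancellations to be had. That is precisely where your bound falls short: the dimension factors you list combine to $d_A^2\cdot\frac{1}{d_A(d_A-1)}=\frac{d_A}{d_A-1}$, not $\frac{1}{d_A-1}$. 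With your choice $\zeta=\zeta_R\otimes\zeta_{EX}$ (which has $\tr\zeta=1$) the H\"older step contributes no compensating $1/d_A$, and you end up proving only
\[
\mbE_P\Norm{\cT(P\bar\rho P^\dagger)}{1}\leq\sqrt{\tfrac{d_A}{d_A-1}\,2^{-\chmin{A'}{EX}{\omega}-\chmin{A}{R}{\rho}}},
\]
which is useless for the application since it does not decay in $d_A$.

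The missing idea is that the weight operator in the H\"older step must exploit the classical coherence between $R$ and $X$ induced by the complementary-CQ structure. The paper introduces an auxiliary register $P$ labelling the permutation and takes the \emph{permutation-dependent} weight
\[
\zeta_{PREX}=\mbE_P\Big[\sum_x r_x\,\proj{P}{P}_P\otimes\proj{x}{x}_R\otimes\sigma_E^{P(x)}\otimes\proj{P(x)}{P(x)}_X\Big],
\]
where $\tau_R=\sum_x r_x\proj{x}{x}$ and $\sigma_{EX}=\sum_x\sigma_E^x\otimes\proj{x}{x}$ are the (necessarily diagonal/CQ) min-entropy optimizers. Because for each fixed $P$ the operator $\bar\cT(P\bar\rho P^\dagger)$ lives in the $X$-block $\proj{P(x)}{P(x)}$ whenever the $R$-block is $\proj{x}{x}$, sandwiching by $\zeta^{-1/4}$ reproduces exactly the same $\tilde\cT(P\tilde\rho P^\dagger)$ you would get from $(\tau_R\otimes\sigma_{EX})^{-1/4}$; but now $\tr\zeta_{PREX}=\sum_x r_x\,\mbE_P\tr\sigma_E^{P(x)}=\frac{1}{d_A}$, which supplies the crucial extra $1/d_A$ and turns $d_A/(d_A-1)$ into $1/(d_A-1)$.
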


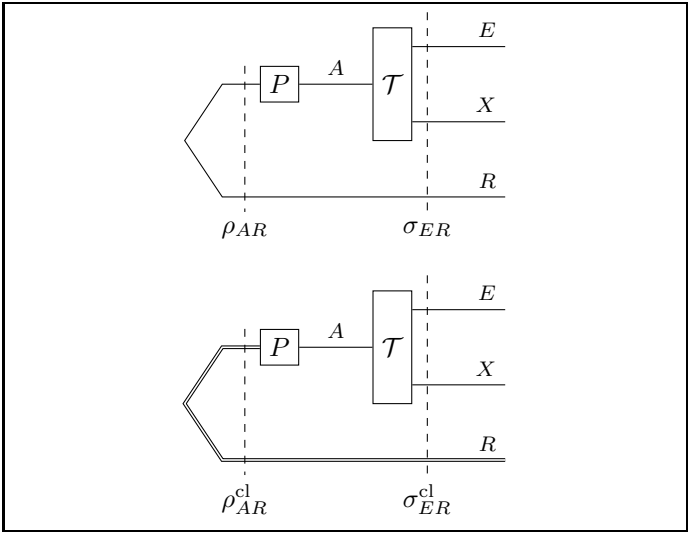
\begin{figure}
	\centering

		\begin{tikzpicture}[auto]
		\tikzstyle{gate} = [fill=white, draw]
		\tikzstyle{syslabel} = [font=\footnotesize]

		\coordinate (rightwall) at (5,0);

		\node[gate] (encoder) at (2,3) {$P$};
		\node[gate,minimum height=1.5 cm] (channel) at (3.5,3) {$\mathcal{T}$};

		\draw (encoder.west) -- ++(-.5,0) -- ++(-.5,-.75) -- ++(.5,-.75) coordinate (bottomline) to (rightwall |- bottomline) node[syslabel,above left] {$R$};
		\draw (encoder) to node[syslabel] {$A$} (channel);
		\draw ([yshift=.5cm] channel.east) to ([yshift=.5cm] rightwall |- channel.east) node[syslabel,above left] {$E$};
		\draw ([yshift=-.5cm] channel.east) to ([yshift=-.5cm] rightwall |- channel.east) node[syslabel,above left] {$X$};


		\draw[dashed] ([xshift=.2cm,yshift=.2cm] channel.east |- channel.north) -- ([xshift=.2cm,yshift=-.2cm] channel.east |- bottomline) node[below] {$\sigma_{ER}$};

		\draw[dashed] ([xshift=-.2cm] encoder.west |- encoder.north) -- ([xshift=-.2cm,yshift=-.2cm] encoder.west |- bottomline) node[below] {$\rho_{AR}$};

		\begin{scope}[yshift=-3.5cm]
		\coordinate (rightwall) at (5,0);

		\node[gate] (encoder) at (2,3) {$P$};
		\node[gate,minimum height=1.5cm] (channel) at (3.5,3) {$\mathcal{T}$};

		\draw[double] (encoder.west) -- ++(-.5,0) -- ++(-.5,-.75) -- ++(.5,-.75) coordinate (bottomline) to (rightwall |- bottomline) node[syslabel,above left] {$R$};
		\draw (encoder) to node[syslabel] {$A$} (channel);
		\draw ([yshift=.5cm] channel.east) to ([yshift=.5cm] rightwall |- channel.east) node[syslabel,above left] {$E$};
		\draw ([yshift=-.5cm] channel.east) to ([yshift=-.5cm] rightwall |- channel.east) node[syslabel,above left] {$X$};


		\draw[dashed] ([xshift=.2cm,yshift=.2cm] channel.east |- channel.north) -- ([xshift=.2cm,yshift=-.2cm] channel.east |- bottomline) node[below] {$\sigma^{\cl}_{ER}$};

		\draw[dashed] ([xshift=-.2cm] encoder.west |- encoder.north) -- ([xshift=-.2cm,yshift=-.2cm] encoder.west |- bottomline) node[below] {$\rho^{\cl}_{AR}$};

		\end{scope}

		\end{tikzpicture}

		\caption{Illustration of the dequantizing theorem. The top diagram illustrates the situation in which we apply the dequantizing theorem: we apply a random permutation $P$ to $\rho_{AR}$ following by the channel. The bottom diagram illustrates the ``ideal'' state we would like to get at the end: a state containing only classical correlations between $R$ and $E$.}
\label{fig:dequantizing-comparison}
\end{figure}


In other words, Theorem \ref{thm:dequantizing} gives a bound on how close the state $\cT(P_A \rho_{AR} P_A^{\dagger})$ is from a state containing only classical correlations between $R$ and $E$ (namely, $\cT(P_A \rho_{AR}^{\cl} P_A^{\dagger})$). See Figure \ref{fig:dequantizing-comparison} for an illustration.

The rest of this section is devoted to the proof of Theorem \ref{thm:dequantizing} and is organized in three subsections. In the first one we calculate the above expectation value with the Schatten 1-norm replaced by the Schatten 2-norm. We conclude the proof of the above theorem in the second subsection showing how the statement found about the Schatten 2-norm can be transformed into the one above. In the third subsection we reformulate the upper bound of Theorem~\ref{thm:dequantizing} using the smooth conditional min-entropy. This enables us to make statements about independent, identically distributed channels via the QAEP, Equation~\eqref{QAEP}.

\subsection{Dequantizing with Schatten 2-Norms}

We first prove a statement that holds for general hermiticity preserving, linear maps $\cN_{A\rightarrow B}\in\textnormal{Hom}(\linops{\HA},\linops{\HB})$ with Choi-Jamiołkowski representation $\omega_{A'B}\in\hermops{\cH_{A'B}}$. In Proposition \ref{lemdist} below, we will compute the expectation value
\begin{multline*}
\EP{\Norm{\mathcal{N}\left(\idx{P}{A} (\idx{\rho}{AR} - \idx{\rho}{AR}^{\cl}) \idx{P}{A}^\dagger\right)}{2}^2}\\
=\frac{\idx{d}{A}}{\idx{d}{A}-1}\:{\Norm{\rho_{AR}-\rho_{AR}^{\cl}}{2}^2}\:\Norm{\idx{\omega}{A'B}-\idx{\omega}{A'B}^{\cl}}{2}^2,
\end{multline*}
where the permutation operators act by permuting the basis vectors of the Schmidt-basis\footnote{An extension of this result to an arbitrary basis is known \cite{mythesis}.}. To prove this, we first need the following lemma:

\begin{lemma}
 \label{counting}Let $\idx{\cH}{A}$ be a Hilbert space with orthonormal basis $\{\ket{i}\}_{i=1,...,\idx{d}{A}}$ and let $\mathbb{P}$ be the corresponding set of permutation operators. Then for any $i\neq j$ one has that
\begin{multline*}
	\mbE_{P \in \mbP}\left(P^{\otimes2}\:(\proji{i}{j}{A}\otimes\proji{j}{i}{A'})\:(P^\dagger)^{\otimes2}\right) \\
=\frac{1}{\idx{d}{A}(\idx{d}{A}-1)}\left(F_{AA'}-d_A\:T_{AA'}\right).
\end{multline*}
\end{lemma}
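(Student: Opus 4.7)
The plan is a direct combinatorial computation, since conjugation by a permutation operator simply relabels basis indices. First I would observe that $P(\pi)\proji{i}{j}{A}P(\pi)^\dagger = \proji{\pi(i)}{\pi(j)}{A}$, so the operator inside the expectation is $\proji{\pi(i)}{\pi(j)}{A}\otimes\proji{\pi(j)}{\pi(i)}{A'}$. The average over $\mathbb{P}$ therefore reduces to counting, for each ordered pair of distinct basis labels $(k,l)$, how many $\pi\in S_{d_A}$ satisfy $\pi(i)=k$ and $\pi(j)=l$.

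Next I would note that, for fixed $i\neq j$ and fixed $k\neq l$, exactly $(d_A-2)!$ permutations satisfy both constraints, since the remaining $d_A-2$ labels can be permuted freely. Dividing by $d_A!$ gives a uniform weight of $\frac{1}{d_A(d_A-1)}$ on every ordered pair with $k\neq l$, so the expectation equals $\frac{1}{d_A(d_A-1)}\sum_{k\neq l}\proji{k}{l}{A}\otimes\proji{l}{k}{A'}$.

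To finish, I would recognize the swap $F_{AA'}=\sum_{k,l}\proji{k}{l}{A}\otimes\proji{l}{k}{A'}$ from Section~\ref{prel:not}, and observe that its diagonal contribution ($k=l$) is $\sum_k\proji{k}{k}{A}\otimes\proji{k}{k}{A'}=d_A\,T_{AA'}$ by the definition of $T_{AA'}$. Removing the diagonal yields $F_{AA'}-d_A T_{AA'}$, and combining with the prefactor gives the claimed formula. There is no real conceptual obstacle: the argument is bookkeeping, and the only point requiring a bit of care is tracking the $d_A$ factor in $T_{AA'}$ when identifying the diagonal part of $F_{AA'}$.
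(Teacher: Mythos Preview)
Your proposal is correct and follows essentially the same approach as the paper: both arguments count the $(d_A-2)!$ permutations sending the ordered pair $(i,j)$ to a given ordered pair $(k,l)$ with $k\neq l$, obtain the uniform average $\frac{1}{d_A(d_A-1)}\sum_{k\neq l}\proji{k}{l}{A}\otimes\proji{l}{k}{A'}$, and then identify this with $F_{AA'}-d_A T_{AA'}$. The paper's proof stops at the sum and leaves the identification with $F_{AA'}-d_A T_{AA'}$ implicit, whereas you spell it out, but there is no substantive difference.
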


\begin{IEEEproof}
There are $\idx{d}{A}!$ permutation operators in $\mathbb{P}$. For arbitrary but fixed $i\neq j$ and $k\neq l$ there are $(\idx{d}{A}-2)!$ permutation operators with
$(\idx{P}{A})^{\otimes2}\:\ket{i}_{A}\otimes\ket{j}_{A'}=\ket{k}_A\otimes\ket{l}_{A'}$.
On the other hand there is no permutation such that for $i\neq j$ the operator $(\idx{P}{A})^{\otimes2}$ maps $\ket{i}_{A}\otimes\ket{j}_{A'}$ to $\ket{k}_A\otimes\ket{l}_{A'}$ with $k=l$. We conclude that
\begin{align*}
&\EP\left((\idx{P}{A})^{\otimes2}\:(\proji{i}{j}{A}\otimes\proji{j}{i}{A'})\:(\idx{P}{A}^\dagger)^{\otimes2}\right)\\
&=\frac{(\idx{d}{A}-2)!}{\idx{d}{A}!}\sum_{k\neq l}^{\idx{d}{A}}\proji{k}{l}{A}\otimes\proji{l}{k}{A'}.
\end{align*}
\end{IEEEproof}

\begin{proposition}[Distance from classicality]\label{lemdist}
	Let $\cN_{A\rightarrow B}\in\textnormal{Hom}(\linops{\HA},\linops{\HB})$ be a linear map with Choi-Jamiołkowski representation $\omega_{A'B}\in\hermops{\cH_{BA'}}$, and let $\ket{\rho}_{AR} = \sum_i \sqrt{\lambda_i} \ket{ii}_{AR}$. Then
\begin{multline*}
\EP{\Norm{\mathcal{N}\left(\idx{P}{A} (\idx{\rho}{AR} - \idx{\rho}{AR}^{\cl}) \idx{P}{A}^\dagger \right)}{2}^2}\\
=\frac{\idx{d}{A}}{\idx{d}{A}-1}\:{\Norm{\rho_{AR}-\rho_{AR}^{\cl}}{2}^2}\:\Norm{\idx{\omega}{A'B}-\idx{\omega}{A'B}^{\cl}}{2}^2,
\end{multline*}
where the operators $P$ permute the Schmidt-basis vectors of $\rhoA$.
\end{proposition}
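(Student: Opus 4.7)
The plan is to apply the ``swap trick'' to reduce the squared Schatten~2-norm to a trace on two copies of every system, use Lemma~\ref{counting} to carry out the permutation average on the $A$-systems, and then identify the surviving trace with $d_A^2\,\norm{\omega_{A'B}-\omega_{A'B}^{\cl}}{2}^2$ via Choi--Jamiołkowski. Concretely, setting $\Delta_{AR}:=\rho_{AR}-\rho_{AR}^{\cl}$ and writing $\norm{Y}{2}^2=\tr((F_{BB'}\otimes F_{RR'})(Y\otimes Y))$ for the Hermitian operator $Y:=(\cN\otimes\cI_R)(P_A\Delta_{AR}P_A^\dagger)$, and using that $\cN$, $P_A$, and $F_{RR'}$ act on pairwise disjoint tensor factors, I would interchange expectation and partial trace to reduce the quantity of interest to
\[
\tr\bigl(F_{BB'}\,(\cN\otimes\cN)\,\EP{P^{\otimes 2}\,M_{AA'}\,(P^\dagger)^{\otimes 2}}\bigr),
\]
where $M_{AA'}:=\ptr{RR'}{F_{RR'}\,(\Delta_{AR}\otimes\Delta_{A'R'})}$ lives on $AA'$ alone.

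Next I would compute $M_{AA'}$ explicitly from the Schmidt decomposition $\ket{\rho}_{AR}=\sum_i\sqrt{\lambda_i}\ket{ii}_{AR}$. Since $\Delta_{AR}=\sum_{i\neq j}\sqrt{\lambda_i\lambda_j}\,\proji{ii}{jj}{AR}$, contracting the $R$-indices through $\tr(F\,X\otimes Y)=\tr(XY)$ keeps only the terms with matching pairs and yields $M_{AA'}=\sum_{i\neq j}\lambda_i\lambda_j\,\proji{i}{j}{A}\otimes\proji{j}{i}{A'}$. Averaging each rank-one term by Lemma~\ref{counting} and observing that $\sum_{i\neq j}\lambda_i\lambda_j=1-\sum_i\lambda_i^2=\norm{\Delta_{AR}}{2}^2$ then collapses the expectation to
\[
\EP{P^{\otimes 2}\,M_{AA'}\,(P^\dagger)^{\otimes 2}}=\frac{\norm{\Delta_{AR}}{2}^2}{d_A(d_A-1)}\,(F_{AA'}-d_A\,T_{AA'}).
\]

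The remaining task is to evaluate the two ``transfer'' traces $\tr(F_{BB'}(\cN\otimes\cN)(F_{AA'}))$ and $\tr(F_{BB'}(\cN\otimes\cN)(T_{AA'}))$. Expanding in the reference basis and using $\omega_{A'B}=\tfrac{1}{d_A}\sum_{ij}\proji{i}{j}{A'}\otimes\cN(\proji{i}{j}{A})$ together with Hermiticity-preservation of $\cN$ (so that $\cN(\proji{j}{i}{A})=\cN(\proji{i}{j}{A})^\dagger$, which identifies $\sum_{ij}\tr(\cN(\proji{i}{j}{A})\cN(\proji{j}{i}{A}))$ with $d_A^2\,\norm{\omega_{A'B}}{2}^2$), the two traces simplify to $d_A^2\,\norm{\omega_{A'B}}{2}^2$ and $d_A\,\norm{\omega_{A'B}^{\cl}}{2}^2$ respectively. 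Because $\omega^{\cl}_{A'B}$ is diagonal on $A'$ while $\omega_{A'B}-\omega^{\cl}_{A'B}$ has vanishing diagonal on $A'$, these two pieces are Hilbert--Schmidt orthogonal, so $\norm{\omega}{2}^2-\norm{\omega^{\cl}}{2}^2=\norm{\omega-\omega^{\cl}}{2}^2$, and combining with the prefactor $\norm{\Delta_{AR}}{2}^2/(d_A(d_A-1))$ yields the stated equality. The main difficulty is one of bookkeeping rather than any deep conceptual step---one must juggle four copies of $A$, two copies each of $B$ and $R$, and the swap/permutation operators on each, and invoke Hermiticity-preservation of $\cN$ at exactly the right moment when matching the trace against $\norm{\omega}{2}^2$.
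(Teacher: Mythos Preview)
Your proposal is correct and follows essentially the same route as the paper: swap trick, Lemma~\ref{counting} for the permutation average, identification of $\sum_{i\neq j}\lambda_i\lambda_j$ with $\norm{\rho-\rho^{\cl}}{2}^2$, and evaluation of the two remaining traces via the Choi--Jamiołkowski state. The only cosmetic differences are that you apply the swap trick on $B$ and $R$ simultaneously (the paper first expands in the $R$-Schmidt basis by hand and then swaps on $B$), and you evaluate $\tr(F_{BB'}(\cN\otimes\cN)(T_{AA'}))=d_A\norm{\omega^{\cl}}{2}^2$ directly, whereas the paper rewrites that term as $\tr((\cN^{\cl})^{\otimes 2}(F_{AA'})F_{BB'})$ before applying the same Choi--Jamiołkowski identity used for the $F$-term.
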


\begin{IEEEproof}
Rewriting the Schatten 2-norm in terms of the trace, we get 

\begin{align}
&\EP{\Norm{\mathcal{N}(\idx{P}{A} \otimes \idi{R} \ (\idx{\rho}{AR}-\idx{\rho}{AR}^{\cl})\ \idx{P}{A}^\dagger \otimes \idi{R})}{2}^2}\nonumber\\
&=\EP{\Trace{\mathcal{N}(\idx{P}{A} \otimes \idi{R} \ (\idx{\rho}{AR}-\idx{\rho}{AR}^{\cl})\ \idx{P}{A}^\dagger \otimes \idi{R})^2}}\nonumber\\
&=\EP{\Trace{\mathcal{N}\left(\sum_{i\neq j}^{\idx{d}{R}}{\sqrt{\lambda_i\lambda_j}\:\idx{P}{A}\proji{i}{j}{A}\idx{P}{A}^\dagger \otimes\proji{i}{j}{R}}\right)^2}}\nonumber\\
&=\sum_{i\neq j}^{\idx{d}{R}}\lambda_i\lambda_j\left[\EP{\Trace{\mathcal{N}\left(\idx{P}{A}\proji{i}{j}{A}\idx{P}{A}^\dagger\right)\mathcal{N}\left(\idx{P}{A}\proji{j}{i}{A}\idx{P}{A}^\dagger\right)}}\right]\nonumber\\
&=\sum_{i\neq j}^{\idx{d}{R}}\Big\{\lambda_i\lambda_j\ \cdot\nonumber\\
&\quad\Trace{\mathcal{N}^{\otimes2}\left(\EP\left(\idx{P}{A}^{\otimes2}(\proji{i}{j}{A}\otimes\proji{j}{i}{A'})(\idx{P}{A}^\dagger)^{\otimes2}\right)\right)\idx{F}{BB\rq{}}}\Big\}\label{swaptrick}\\
&=\sum_{i\neq j}^{\idx{d}{R}}\frac{\lambda_i\lambda_j}{\idx{d}{A}(\idx{d}{A}-1)}\Trace{\mathcal{N}^{\otimes2}\left(\left(F_{AA\rq{}}-d_A\:T_{AA'}\right)\right)\idx{F}{BB\rq{}}}\label{expectvalue}\\
&=\frac{\norm{\rho_{AR}-\rho_{AR}^{\cl}}{2}^2}{d_A(\idx{d}{A}-1)}\ \cdot\nonumber\\
&\quad\left(\Trace{\mathcal{N}^{\otimes2}\left(\idx{F}{AA\rq{}}\right)\idx{F}{BB\rq{}}}-\Trace{(\mathcal{N}^{\cl})^{\otimes2}\left(\idx{F}{AA\rq{}}\right)\idx{F}{BB\rq{}}}\right)\label{useCJINV}.
\end{align}

Equation \eqref{swaptrick} is by an application of the swap trick and in equation \eqref{expectvalue} we applied Lemma~\ref{counting}. To simplify \eqref{useCJINV} we evaluate the term $\Trace{\mathcal{N}^{\otimes2}\left(\idx{F}{AA\rq{}}\right)\idx{F}{BB\rq{}}}$ using the inverse Choi-Jamiołkowski isomorphism:
\begin{align}
&\Trace{\mathcal{N}^{\otimes2}\left(\idx{F}{AA\rq{}}\right)\idx{F}{BB\rq{}}}\nonumber\\
&=\idx{d}{A}^2\:\Trace{\Ptrace{AA'}{\idx{\omega}{AB}^{\otimes2}\left(\idx{F}{AA\rq{}}\otimes\idi{BB'}\right)}\idx{F}{BB\rq{}}}\nonumber\\
&=\idx{d}{A}^2 \Trace{\idx{\omega}{AB}^{\otimes2}\left(\idx{F}{AA\rq{}}\otimes\idi{BB'}\right)\left(\idi{AA'}\otimes\idx{F}{BB\rq{}}\right)}\label{ptradjoint}\\
&=\idx{d}{A}^2\:\Trace{\idx{\omega}{AB}^{2}}\nonumber.
\end{align}
In Equation \eqref{ptradjoint} we used the fact that the adjoint mapping of the partial trace is tensoring with the identity.
Analogously, one has
\begin{align*}
\Trace{(\mathcal{N}^{\cl})^{\otimes2}\left(\idx{F}{AA\rq{}}\right)\idx{F}{BB\rq{}}}=\idx{d}{A}^2\:\Trace{(\idx{\omega}{AB}^{\cl})^{2}}
\end{align*}
and the second factor of \eqref{useCJINV} becomes
\begin{align*}
&\Trace{\mathcal{N}^{\otimes2}\left(\idx{F}{AA\rq{}}\right)\idx{F}{BB\rq{}}}-\Trace{(\mathcal{N}^{\cl})^{\otimes2}\left(\idx{F}{AA\rq{}}\right)\idx{F}{BB\rq{}}}\\
&=\idx{d}{A}^2\left(\Trace{\idx{\omega}{AB}^2}-\Trace{(\idx{\omega}{AB}^{\cl})^2}\right)\\
&=\idx{d}{A}^2\Norm{\idx{\omega}{AB}\:-\:\idx{\omega}{AB}^{\cl}}{2}^2.
\end{align*}
\end{IEEEproof}

\subsection{Dequantizing with the Schatten 1-norm}
In this subsection we derive Theorem~\ref{thm:dequantizing} with an application of Proposition~\ref{lemdist}. We use the following lemma.

\begin{lemma}\label{minentr}
For any $\idx{\xi}{AR}\in\subnormstates{\idx{\cH}{AR}}$, there exists an operator $\idx{\zeta}{R}\in\normstates{\idx{\cH}{R}}$ with
$$\frac{1}{\tr{[\idx{\xi}{AR}]}}\Trace{\left((\idi{A}\otimes\zetaR^{-1/2})\idx{\xi}{AR}\right)^2}\leq2^{-\chmin{A}{R}{\xi}}.$$
\end{lemma}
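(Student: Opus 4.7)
My plan is to derive the bound directly from the variational definition of the conditional min-entropy (Definition~\ref{conditionalminentropy}). By definition, there exists some $\zeta_R \in \normstates{\cH_R}$ realizing the maximum, i.e.\ satisfying the operator inequality
\begin{equation*}
\xi_{AR} \ \leq\ 2^{-\chmin{A}{R}{\xi}}\, \id_A \otimes \zeta_R .
\end{equation*}
This $\zeta_R$ is the one I will use in the lemma. Writing $H := \chmin{A}{R}{\xi}$ and $Y := \id_A \otimes \zeta_R^{-1/2}$, I want to bound $\tr[(Y\xi_{AR})^2]$.

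The key manipulation is to rewrite the quantity symmetrically via cyclicity of the trace: $\tr[(Y\xi_{AR})^2] = \tr[Y\xi_{AR} Y \xi_{AR}] = \tr[\xi_{AR} \cdot (Y\xi_{AR} Y)]$. Now $Y\xi_{AR}Y = (\id_A \otimes \zeta_R^{-1/2})\xi_{AR}(\id_A \otimes \zeta_R^{-1/2}) \geq 0$, so I may apply the operator inequality above to $\xi_{AR}$ inside the trace against a positive operator, obtaining
\begin{equation*}
\tr\!\big[\xi_{AR}\, (Y\xi_{AR} Y)\big] \ \leq\ 2^{-H}\, \tr\!\big[(\id_A \otimes \zeta_R)(Y \xi_{AR} Y)\big] .
\end{equation*}
Finally I use cyclicity once more to pull a $Y$ around and simplify $Y(\id_A \otimes \zeta_R)Y = \id_A \otimes (\zeta_R^{-1/2}\zeta_R \zeta_R^{-1/2}) = \id_A \otimes \Pi_R$, where $\Pi_R$ is the projector onto the support of $\zeta_R$. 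The right-hand side then becomes $2^{-H}\tr[(\id_A \otimes \Pi_R)\xi_{AR}]$.

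The only remaining subtlety—and the one point that needs a brief argument rather than pure algebra—is that $\zeta_R$ need not be of full rank, so $Y$ should be interpreted as the generalized inverse of $\zeta_R^{1/2}$. The original operator inequality $\xi_{AR} \leq 2^{-H}\id_A \otimes \zeta_R$ forces $\mathrm{supp}(\xi_R) \subseteq \mathrm{supp}(\zeta_R)$, which in turn forces $\mathrm{supp}(\xi_{AR}) \subseteq \cH_A \otimes \mathrm{supp}(\zeta_R)$ (by the usual argument: $\tr[(\id \otimes (\id - \Pi_R))\xi_{AR}] = \tr[(\id-\Pi_R)\xi_R] = 0$ and $\xi_{AR} \geq 0$ together imply $(\id \otimes (\id-\Pi_R))\xi_{AR}(\id \otimes (\id-\Pi_R)) = 0$). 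Consequently $(\id_A \otimes \Pi_R)\xi_{AR} = \xi_{AR}$, and the final trace collapses to $\tr[\xi_{AR}]$. Dividing through by $\tr[\xi_{AR}]$ yields the claimed bound $2^{-\chmin{A}{R}{\xi}}$. I do not anticipate any real obstacle; the whole argument is a three-line chain of cyclicity and one use of the defining operator inequality of $\chmin{A}{R}{\xi}$, with the support issue being the only genuine point requiring care.
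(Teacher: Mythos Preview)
Your proposal is correct and follows essentially the same route as the paper: choose $\zeta_R$ optimal in Definition~\ref{conditionalminentropy}, then use the resulting operator inequality $\xi_{AR}\leq 2^{-H}\id_A\otimes\zeta_R$ against the positive operator $Y\xi_{AR}Y$ inside the trace. The paper's version simply conjugates the operator inequality by $\sqrt{\xi_{AR}}(\id_A\otimes\zeta_R^{-1/2})$ before tracing (which is the same step in slightly different form), and it glosses over the support issue you handle explicitly; your treatment is a bit more careful but not genuinely different.
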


\begin{IEEEproof}
Choose $\zetaR$ such that it maximizes the min-entropy, i.e.\ it satisfies
$\xi_{AR} \leq2^{-\chmin{A}{R}{\xi}}\idi{A}\otimes\zetaR $.
Hence,
\begin{align*}
\sqrt{\xi_{AR}}(\idi{A}\otimes\zetaR^{-\frac{1}{2}})\xi_{AR}(\idi{A}\otimes\zetaR^{-\frac{1}{2}})\sqrt{\xi_{AR}}\leq2^{-\chmin{A}{R}{\xi}}\xi_{AR}
\end{align*}
Taking the trace on both sides concludes the proof.
\end{IEEEproof}

\begin{IEEEproof}[Proof of Theorem~\ref{thm:dequantizing}]
We first introduce some notation. We abbreviate the difference between $\rho_{AR}$ and its classicalized version by writing $\bar{\rho}_{AR}:=\rho_{AR}-\rho_{AR}^{\cl}$. By Lemma~\ref{minentr} there are operators $\sigma_{EX}$ and $\tau_R$ such that 
$$\frac{1}{\tr{[\omega_{A'EX}]}}\Trace{\left((\idi{A'}\otimes\sigma_{EX}^{-1/2})\omega_{A'EX} \right)^2}\leq2^{-\chmin{A'}{EX}{\omega}}$$
and
$$\frac{1}{\tr{[\rho_{AR}]}}\Trace{\left((\idi{A}\otimes\tau_{R}^{-1/2})\rhoAR \right)^2}\leq2^{-\chmin{A}{R}{\rho}}.$$
Since $\bar{\cT}_{A'\rightarrow EX}$ is a complementary channel of a CQ-channel we can assume that $\sigma_{EX}$ has CQ-structure, i.e. $\sigma_{EX}=\sum_x\sigma_E^x\otimes\proji{x}{x}{X}$. Furthermore the operator $\rhoAR$ is given in its Schmidt-basis, such that $\tau_R$ can be written as $\tau_R=\sum_xr_x\proji{x}{x}{R}$. (Both facts follow from Lemma~\ref{marcoslemma} in Appendix~\ref{A} with $\varepsilon=0$.)

We introduce a system $P$ with $(\idx{d}{A}!)$-dimensional Hilbert space, $\idx{\cH}{P}$, and canonical basis vectors $\ket{P}_P$ that correspond to the permutation operators of $P\in\mathbb{P}$. We define the operator
\begin{multline*}
\idx{\zeta}{PREX}:=\\
\EP{\left(\sum_{x=1}^{\idx{d}{R}}r_x\proj{P}{P}_{P}\otimes\proj{x}{x}_{R}\otimes\idx{\sigma}{E}^{P(x)}\otimes\proj{P(x)}{P(x)}_{X}\right)} \,,
\end{multline*}
which can be inverted on its support to yield
\begin{multline*}
\idx{\zeta}{PREX}^{-1}:=d_{A!}\sum_{P\in\mathbb{P}}\Big(\sum_{x=1}^{\idx{d}{R}}r_x^{-1}\proj{P}{P}_{P}\otimes\proj{x}{x}_{R}\ \otimes\\
\qquad\qquad\left(\idx{\sigma}{E}^{P(x)}\right)^{-1}\otimes\proj{P(x)}{P(x)}_{X}\Big) \,.
\end{multline*}

We note that the operator $\zeta^{\frac{1}{4}}\zeta^{-\frac{1}{4}}$ is a projector and 
\begin{align}
&\EP{\proji{P}{P}{P}\otimes\bar{\mathcal{T}}(\idx{P}{A} \otimes \idi{R}\bar{\rho}_{AR}\idx{P}{A}^\dagger \otimes \idi{R})}\nonumber\\
&=(\zeta^{\frac{1}{4}}\zeta^{-\frac{1}{4}})\left[\EP{\proji{P}{P}{P}\otimes\bar{\mathcal{T}}(\idx{P}{A} \otimes \idi{R}\bar{\rho}_{AR}\idx{P}{A}^\dagger \otimes \idi{R})}\right](\zeta^{-\frac{1}{4}}\zeta^{\frac{1}{4}})\label{gotridofPi}
\end{align}

Using these operators, we write
\begin{multline}
\EP{\Norm{\bar{\mathcal{T}}(\idx{P}{A} \bar{\rho}_{AR} \idx{P}{A}^\dagger)}{1}}\\
\begin{split}
&=\EP{\Norm{\proji{P}{P}{P}\otimes\left(\bar{\mathcal{T}}(\idx{P}{A} \bar{\rho}_{AR} \idx{P}{A}^\dagger)\right)}{1}}\\
&=\Norm{\EP{\proji{P}{P}{P}\otimes\left(\bar{\mathcal{T}}(\idx{P}{A} \bar{\rho}_{AR} \idx{P}{A}^\dagger)\right)}}{1}\\
&\leq\sqrt{\Trace{\idx{\zeta}{PREX}}} \Norm{{\zeta}^{-\frac{1}{4}}\left(\EP\proji{P}{P}{P}\otimes\bar{\mathcal{T}}(\idx{P}{A} \bar{\rho}_{AR} \idx{P}{A}^\dagger)\right){\zeta}^{-\frac{1}{4}}}{2}\label{plugback}.
\end{split}
\end{multline}

Inequality~\eqref{plugback} follows from Equation~\eqref{gotridofPi} together with an application of the H\"older-type inequality $\Norm{ABC}{1}\leq\Norm{\abs{A}^4}{1}^{\frac{1}{4}}\Norm{\abs{B}^2}{1}^{\frac{1}{2}}\Norm{\abs{C}^4}{1}^{\frac{1}{4}}$, \cite{bhatia}.
%
%
%
%
%
%
The trace term on the right hand side of Inequality~\eqref{plugback} can be evaluated directly to be
\begin{align*}
\Trace{\idx{\zeta}{PREX}}&=\sum_{x=1}^{d_R} r_x\, \EP{\Trace{\idx{\sigma}{E}^{P(x)}}}=\frac{1}{d_A}.
\end{align*}
%
Thus, it is sufficient to evaluate the term with the Schatten 2-norm. For notational convenience we introduce the map $\tilde{\cT}(\cdot):=(\sigma_{EX})^{-\frac{1}{4}}\bar{\cT}(\cdot)(\sigma_{EX})^{-\frac{1}{4}}$ with Choi-Jamiołkowski representation $\tilde{\omega}_{A'EX}$ and the operator $\tilde{\rho}_{AR}:=(\idi{A}\otimes\tau_R)^{-\frac{1}{4}}\bar{\rho}_{AR}(\idi{A}\otimes\tau_R)^{-\frac{1}{4}}$.
Using the fact that $\bar{\cT}$ is the complementary channel of a CQ-channel one can verify that
\begin{multline*}
\zeta^{-\frac{1}{4}}\left(\EP\proji{P}{P}{P}\otimes\bar{\mathcal{T}}(\idx{P}{A}\,\bar{\rho}_{AR}\,\idx{P}{A}^\dagger)\right)\zeta^{-\frac{1}{4}}\\
\begin{split}
&=\frac{1}{\sqrt{d_A!}}\sum_{P\in\mathbb{P}}\proji{P}{P}{P}\, \otimes\\
&\quad\left((\sigma_{EX}\!\otimes\!\tau_{R})^{-\frac{1}{4}}\bar{\mathcal{T}}( \idx{P}{A}\, \bar{\rho}_{AR} \idx{P}{A}^\dagger)(\sigma_{EX}\!\otimes\!\tau_{R})^{-\frac{1}{4}}\right)\\
&=\frac{1}{\sqrt{d_A!}}\sum_{P\in\mathbb{P}}\proji{P}{P}{P}\otimes\mathcal{\widetilde{T}}(\idx{P}{A} \,\tilde{\rho}_{AR}\,\idx{P}{A}^\dagger) .
\end{split}
\end{multline*}
Using this, we find
\begin{multline*}
\Norm{\zeta^{-\frac{1}{4}}\left(\EP\proji{P}{P}{P}\otimes\bar{\mathcal{T}}(\idx{P}{A} \:\bar{\rho}_{AR}\:\idx{P}{A}^\dagger )\right)\zeta^{-\frac{1}{4}}}{2}^{2}\\
\begin{split}
&=\EP{\Norm{\mathcal{\widetilde{T}}(\idx{P}{A}  \:\tilde{\rho}_{AR}\:\idx{P}{A}^\dagger )}{2}^2}\\
&=\frac{\idx{d}{A}}{\idx{d}{A}-1}\Norm{\idx{\tilde{\omega}}{A'EX}\:-\:\idx{\tilde{\omega}}{A'EX}^{\cl}}{2}^2\Norm{\idx{\tilde{\rho}}{AR}\:-\:\idx{\tilde{\rho}}{AR}^{\cl}}{2}^2\\
&\leq\frac{\idx{d}{A}}{\idx{d}{A}-1}\:\Trace{\idx{\tilde{\omega}}{A'EX}^2}\Trace{\idx{\tilde{\rho}}{AR}^2}\\
&\leq\frac{\idx{d}{A}}{\idx{d}{A}-1}\:2^{-\chmin{A'}{EX}{\omega}-\chmin{A}{R}{\rho}},
\end{split}
\end{multline*}
where we apply Proposition~\ref{lemdist} to obtain the second equality and use the special choice of $\sigma_{EX}$ and $\tau_R$ (cf. Lemma~\ref{minentr}) for the last inequality. Plugging this into Equation~\eqref{plugback} concludes the proof.
\end{IEEEproof}

\subsection{A smoothed version of the dequantizing theorem}
The reason for introducing smooth versions of the min- and max-entropy is that these quantities can vary a lot with small variations in the underlying states while the quantities that we are bounding with them do not. This is such a case; it is therefore desirable to have a version of the dequantizing theorem which involves the smooth entropies. The smooth quantities have the additional advantage that they converge to the corresponding von Neumann quantities in the i.i.d. case (cf. Equation~\ref{QAEP}). We therefore prove the following:

\begin{theorem}\label{thm:maintheorem}
	Let $\bar{\cT}_{A\rightarrow EX}$ be a complementary channel of a CQ-channel, let $\omega_{A'EX}\in\subnormstates{\cH_{EXA'}}$ be the Choi-Jamiołkowski representation of $\bar{\cT}$, and let $\ket{\rho}_{AR} = \sum_i \sqrt{\lambda_i} \ket{ii}_{AR}$. Let $\eps, \eps'$ be such that
$\sqrt{\trace{\rho}} > \eps \geq 0$ and $\sqrt{\trace{\omega}}\ >\eps' \geq0$. Then, 
\begin{multline*}
\EP{\Norm{\bar{\mathcal{T}}\left(\idx{P}{A} (\idx{\rho}{AR} - \idx{\rho}{AR}^{\cl}) \idx{P}{A}^\dagger \right)}{1}}\\
\leq\sqrt{\frac{1}{\idx{d}{A}-1}\:2^{-\chmineeps{\eps'}{A'}{EX}{\omega}-\chmineps{A}{R}{\rho}}}+8\eps + 8\eps',
\end{multline*}
where the permutation operators act by permuting the Schmidt-basis vectors of $\rhoA$.
\end{theorem}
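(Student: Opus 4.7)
The plan is to reduce Theorem~\ref{thm:maintheorem} to the non-smooth Dequantizing Theorem (Theorem~\ref{thm:dequantizing}) by applying that theorem to carefully chosen smoothings $\tilde{\rho}_{AR}$ and $\tilde{\omega}_{A'EX}$ of $\rho_{AR}$ and $\omega_{A'EX}$, and then absorbing the discrepancy between the smoothed and original objects into the additive error $8\eps+8\eps'$ via the triangle inequality.

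First, by Definition~\ref{def:smooth-min-entropy}, I would pick $\tilde{\omega}_{A'EX}\in\subnormstates{\cH_{A'EX}}$ with $P(\omega,\tilde{\omega})\leq\eps'$ attaining $\chmin{A'}{EX}{\tilde\omega}=\chmineeps{\eps'}{A'}{EX}{\omega}$, and $\tilde{\rho}_{AR}\in\subnormstates{\cH_{AR}}$ with $P(\rho,\tilde{\rho})\leq\eps$ attaining $\chmin{A}{R}{\tilde\rho}=\chmineeps{\eps}{A}{R}{\rho}$. Lemma~\ref{marcoslemma} in Appendix~\ref{A} (the same lemma already invoked in the proof of Theorem~\ref{thm:dequantizing}) lets me additionally require the structural properties that make these smoothings admissible inputs to that theorem: namely, that $\tilde{\omega}$ remain the Choi-Jamio{\l}kowski representation of a subnormalized complementary CQ linear map $\tilde{\bar{\cT}}$, and that $\tilde{\rho}_{AR}$ be pure with the same Schmidt basis as $\rho_{AR}$, so that the permutations $P_A$ act on its Schmidt vectors exactly as in the hypothesis of Theorem~\ref{thm:dequantizing}.

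Applying Theorem~\ref{thm:dequantizing} to the triple $(\tilde{\bar{\cT}},\tilde{\rho},P_A)$ then immediately yields
\begin{equation*}
\EP{\Norm{\tilde{\bar{\cT}}\!\left(\idx{P}{A}(\tilde{\rho}_{AR}-\tilde{\rho}_{AR}^{\cl})\idx{P}{A}^\dagger\right)}{1}}\leq\sqrt{\frac{1}{d_A-1}\,2^{-\chmineeps{\eps'}{A'}{EX}{\omega}-\chmineeps{\eps}{A}{R}{\rho}}}.
\end{equation*}
It therefore suffices to bound, uniformly in $P$, the difference between the expression inside $\EP{\cdot}$ above and the one in Theorem~\ref{thm:maintheorem} by $8\eps+8\eps'$.

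The triangle inequality decomposes this difference into a \emph{state-smoothing} piece $\Norm{\bar{\cT}(P_A[(\rho-\rho^{\cl})-(\tilde{\rho}-\tilde{\rho}^{\cl})]P_A^\dagger)}{1}$ and a \emph{channel-smoothing} piece $\Norm{(\bar{\cT}-\tilde{\bar{\cT}})(P_A(\tilde{\rho}-\tilde{\rho}^{\cl})P_A^\dagger)}{1}$. The first is routine: contractivity of the TPCP maps $\bar{\cT}$ and $\cC$ under the Schatten 1-norm, unitary invariance under $P_A$, the triangle inequality, and the Fuchs-van de Graaf bound $\Norm{\rho-\tilde{\rho}}{1}\leq 2P(\rho,\tilde{\rho})\leq 2\eps$ combine to give at most $4\eps$. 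I expect the channel-smoothing piece to be the main obstacle, because a naive route via the Choi-Jamio{\l}kowski isomorphism produces an unwanted prefactor $d_A$ which would destroy the bound. The resolution is to pass through joint purifications: by Uhlmann's theorem (Corollary~\ref{uhlcor}) applied at the Stinespring level, one aligns dilations of $\bar{\cT}$ and $\tilde{\bar{\cT}}$ so that their purified Jamio{\l}kowski states differ in 1-norm by at most $2\eps'$, and then contractivity of the partial trace (together with a further splitting of $\tilde{\rho}-\tilde{\rho}^{\cl}$) delivers a dimension-independent bound of order $\eps'$. Summing the two contributions produces a total of order $\eps+\eps'$, with the remaining slack in the statement's $8\eps+8\eps'$ comfortably absorbing the exact constants, subnormalization effects, and the bookkeeping of telescoping triangle inequalities.
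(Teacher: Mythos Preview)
Your overall architecture matches the paper's: choose optimizing smoothings $\tilde\omega$, $\tilde\rho$, apply the non-smooth Theorem~\ref{thm:dequantizing} to them, and control the discrepancy by the triangle inequality split into a state-smoothing piece and a channel-smoothing piece. Your state-smoothing bound (contractivity of $\bar{\cT}$, of $\cC$, and $\norm{\rho-\tilde\rho}{1}\leq 2\eps$) is fine.

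The gap is in the channel-smoothing piece. You explicitly aim for a bound \emph{uniform in $P$}, and you propose to get it by Uhlmann-aligning Stinespring dilations so that the \emph{purified Choi states} of $\bar{\cT}$ and $\tilde{\bar{\cT}}$ are $2\eps'$-close in trace norm, and then invoking ``contractivity of the partial trace''. This does not work. Closeness of (purified) Choi states controls the channel outputs only on the \emph{maximally entangled}, equivalently the maximally mixed, input; it does not give a dimension-free bound on $\norm{(\bar{\cT}-\tilde{\bar{\cT}})(\sigma)}{1}$ for an arbitrary state $\sigma$. Concretely, from $\norm{(U-\tilde U)\otimes I\ket{\Phi}}{}\lesssim\eps'$ you only get $\norm{U-\tilde U}{2}\lesssim\sqrt{d_A}\,\eps'$, and hence a $\sqrt{d_A}$ (or $d_A$) prefactor when you evaluate on $P_A\tilde\rho P_A^\dagger$. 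Splitting $\tilde\rho-\tilde\rho^{\cl}$ into its two positive summands does not help: each summand is still a generic input, not maximally mixed.

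The paper avoids this by a different mechanism that crucially uses the expectation over $P$ rather than a uniform bound. It writes $\widehat\omega-\omega=\Delta_+-\Delta_-$ with $\Delta_\pm\geq 0$ and Choi preimages $\cD_\pm$ completely positive, so that for positive inputs $\norm{\cD_\pm(\cdot)}{1}=\tr[\cD_\pm(\cdot)]$; then, because $\widehat\rho_A$ is diagonal in the permuted basis, $\EP[P_A\widehat\rho_A P_A^\dagger]=\tr[\widehat\rho]\,\pi_A$, and $\tr[\cD_\pm(\pi_A)]=\tr[\Delta_\pm]\leq 2\eps'$. That averaging step is exactly what kills the $d_A$ factor; your Uhlmann route has no analogue of it.

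A smaller point: Lemma~\ref{marcoslemma} only guarantees that the smoothing $\tilde\rho_{AR}$ can be chosen \emph{coherent classical} in the same basis, not pure. That is all Theorem~\ref{thm:dequantizing} actually needs (its proof goes through verbatim for coherent-classical $\rho_{AR}$), so your claim that $\tilde\rho$ can be taken pure is both unsupported and unnecessary.
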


\begin{IEEEproof}
Let $\idx{\widehat{\omega}}{A'EX}\in\subnormstates{\idx{\cH}{A'EX}}$ be a state that saturates the bound in the definition of the smooth min-entropy, i.e.\:$P(\idx{\omega}{A'EX},\idx{\widehat{\omega}}{A'EX})\leq\eps'$ and $\chmin{A'}{EX}{\widehat{\omega}} = \chmineeps{\eps'}{A'}{EX}{\omega}$. Analogously, $\idx{\widehat{\rho}}{AR}$ satisfies $P(\idx{\widehat{\rho}}{AR},\idx{\rho}{AR})\leq\eps$ and $\chmin{A}{R}{\widehat{\rho}} = \chmineeps{\eps}{A}{R}{\rho}$.

Using inequality \eqref{lem:genfuchs}, we find that
\begin{align}
\left\|\idx{\omega}{A'EX}-\idx{\widehat{\omega}}{A'EX}\right\|_1 &\leq 2\varepsilon' & \left\|\idx{\rho}{AR}-\idx{\widehat{\rho}}{AR}\right\|_1 &\leq 2\varepsilon\label{omrobnd}.
\end{align}
We decompose $\widehat{\omega}-\omega$ into positive operators with orthogonal support writing $\widehat{\omega}-\omega = \Delta_+-\Delta_-$ and conclude from \eqref{omrobnd} that $\Norm{\Delta_+}{1}\leq2\varepsilon'$ and $\Norm{\Delta_-}{1}\leq2\varepsilon'$.

Similarly, we decompose $(\rho_{AR}-\rho_{AR}^{\cl})-(\widehat{\rho}_{AR}-\widehat{\rho}_{AR}^{\cl})=\Gamma_+-\Gamma_-$ with the operators $\Gamma_+$ and $\Gamma_-$ again chosen to be positive and with orthogonal support. From the second inequality in \eqref{omrobnd}, we conclude that
$\Norm{\Gamma_+}{1} \leq4\varepsilon$ and $\Norm{\Gamma_-}{1} \leq4\varepsilon
$.

Let $\widehat{\cT}$, $\cD_+$ and $\cD_-$ be the unique Choi-Jamiołkowski preimages of $\idx{\widehat{\omega}}{A'EX}$, $\Delta_+$ and $\Delta_-$ respectively. 
We note that from the fact that $\omega_{A\rq{}EX}$ is classically coherent between $A\rq{}$ and $X$ it follows that $\widehat{\omega}_{A\rq{}EX}$ also shares this property (See Appendix~\ref{A}, Lemma~\ref{marcoslemma}). Furthermore the state $\rho_{AR}$ is classically coherent between $A$ and $R$, such that the state $\widehat{\rho}_{AR}$  has the same Schmidt-basis as $\rho_{AR}$ (Lemma~\ref{marcoslemma}). We therefore can apply Theorem~\ref{thm:dequantizing} on the states $\widehat{\omega}$ and $\widehat{\rho}$ to find
\begin{multline*}
\sqrt{\frac{1}{\idx{d}{A}-1}\:2^{-\chmineeps{\eps'}{A'}{EX}{\omega}-\chmineps{A}{R}{\rho}}}\\
\begin{split}
&=\sqrt{\frac{1}{\idx{d}{A}-1}\:2^{-\chmin{A'}{EX}{\widehat{\omega}}-\chmin{A}{R}{\widehat{\rho}}}}\\
&\geq\EP{\Norm{\widehat{\mathcal{T}}\left(\idx{P}{A} (\widehat{\rho}_{AR}-\widehat{\rho}_{AR}^{\cl})\:\idx{P}{A}^\dagger \right)}{1}}.
\end{split}
\end{multline*}
Applying the triangle inequality twice shows that for any permutation operator we have
\begin{align}
&\Norm{\widehat{\mathcal{T}}\left(\idx{P}{A} (\widehat{\rho}_{AR}-\widehat{\rho}_{AR}^{\cl}) \idx{P}{A}^\dagger \right)}{1}\nonumber\\
&\geq\Norm{\bar{\mathcal{T}}\left(\idx{P}{A} (\rhoAR-\rhoAR^{\cl}) \idx{P}{A}^\dagger \right)}{1}\nonumber\\
&\quad-\Norm{\left(\bar{\mathcal{T}}-\widehat{\mathcal{T}}\right)\left(\idx{P}{A} (\widehat{\rho}_{AR}-\widehat{\rho}_{AR}^{\cl}) \idx{P}{A}^\dagger \right)}{1}\nonumber\\
&\quad-\Norm{\bar{\mathcal{T}}\left(\idx{P}{A} \left((\rho_{AR}-\rho_{AR}^{\cl})-(\widehat{\rho}_{AR}-\widehat{\rho}_{AR}^{\cl})\right)\idx{P}{A}^\dagger \right)}{1}\label{findthrill}
\end{align}

The first term on the right hand-side of Inequality~\eqref{findthrill} corresponds to the unsmoothed dequantizing theorem. For the remaining two terms 
we find upper bounds:
\begin{multline}
\EP{\Norm{\left(\bar{\mathcal{T}}-\widehat{\mathcal{T}}\right)\left(\idx{P}{A} (\widehat{\rho}_{AR}-\widehat{\rho}_{AR}^{\cl}) \idx{P}{A}^\dagger \right)}{1}}\\
\begin{split}
&\leq\sum_{a\in\{+,-\}}{\EP{\Trace{D_a\left(\idx{P}{A} \widehat{\rho}_{AR} \idx{P}{A}^\dagger \right)}}}\\
&\quad+\sum_{a\in\{+,-\}}{\EP{\Trace{D_a\left(\idx{P}{A} \widehat{\rho}_{AR}^{\cl} \idx{P}{A}^\dagger \right)}}}\\
&=2\sum_{a\in\{+,-\}}{\EP{\Trace{D_a\left(\idx{P}{A}\:\widehat{\rho}_{A}\:\idx{P}{A}^\dagger\right)}}}\\
&\leq 2\:\left(\trace{\Delta_+}+\trace{\Delta_-}\right)\leq8\varepsilon'\label{frstsm}
\end{split}
\end{multline}

We bound the third term in a similar way. We have
\begin{multline}
\EP{\Norm{\bar{\mathcal{T}}\left(\idx{P}{A} \left((\rho_{AR}-\rho_{AR}^{\cl})-(\widehat{\rho}_{AR}-\widehat{\rho}_{AR}^{\cl})\right)\idx{P}{A}^\dagger \right)}{1}}\\
\begin{split}
&\leq\sum_{a\in\{+,-\}}{\EP{\Trace{\bar{\mathcal{T}}\left(\idx{P}{A} \Gamma_a \idx{P}{A}^\dagger \right)}}}\\
&\leqslant \trace{\Gamma_+}+\trace{\Gamma_-}\leq8\varepsilon\label{scndsm}.
\end{split}
\end{multline}

Substituting the expressions \eqref{frstsm} and \eqref{scndsm} into Inequality~\eqref{findthrill} shows that
\begin{multline*}
\EP{\Norm{\widehat{\mathcal{T}}\left(\idx{P}{A} (\widehat{\rho}_{AR}-\widehat{\rho}_{AR}^{\cl})\:\idx{P}{A}^\dagger \right)}{1}}\\
\geq\EP{\Norm{\bar{\mathcal{T}}\left(\idx{P}{A} (\rho_{AR}-\rho_{AR}^{\cl})\:\idx{P}{A}^\dagger \right)}{1}}-8\varepsilon-8\eps'
\end{multline*}
and an application of Theorem~\ref{thm:dequantizing} concludes the proof.
\end{IEEEproof}

\section{From dequantizing to coding}
\label{sec:proof-hsw}

In the following subsections we show how the dequantizing theorem (Theorem \ref{thm:maintheorem}) yields a one-shot coding theorem for classical information. Then, in the last subsection, we apply this coding theorem to the iid scenario and obtain the HSW theorem as a corollary.

\subsection{Sending classical information through a quantum channel}\label{HSW-scenario}
Consider the following scenario: Alice wants to send a classical message $M$ to Bob using a quantum channel $\idx{\cT}{A\rightarrow B}$. For this purpose she encodes her message using an encoding TPCPM $\idx{\cE}{M\rightarrow A}$ into the input $A$ of the channel. Having received the output of the channel $B$, Bob will apply a decoding TPCPM $\idx{\cD}{B\rightarrow\widehat{M}}$ aiming to recover the original message. Since we are interested in the transmission of classical data through a quantum channel, the operation $\cE$ can be assumed to be classical, while $\cT$ is a CQ-channel.

Alice\rq{}s message $M$ is assumed to exhibit perfect correlations with some classical reference system $R$. This means that the joint state of the message and the reference can be represented by the operator $\varphi_{MR}^{\cl}=\sum_{i=1}^{d_M}\lambda_i \proji{ii}{ii}{MR}$ for some probability distribution $\lambda$. 
The aim is that after decoding, Bob holds a system $\widehat{M}$ which contains Bob's decoded message. Naturally, we want $\widehat{M}$ to contain the same message as $M$ with high probability, which is equivalent to saying that $\widehat{M}$ is almost perfectly correlated to the reference system $R$. (See Figure \ref{fig:classical-problem} for an illustration of this scenario.) Mathematically, this means that we want the probability of error $p_e$ to be bounded as
\begin{align*}
	2p_e = {\left\|(\idx{\cD}{B\rightarrow \widehat{M}}\circ\idx{\cT}{A\rightarrow B}\circ\idx{\cE}{M\rightarrow A})(\varphi^{\cl}_{MR})\ -\ \varphi_{\widehat{M}R}^{\cl} \right\|}_1\ \leq\ \varepsilon\label{closea}.
\end{align*}
In other words, the state after encoding, the channel, and decoding is within $\varepsilon$ in trace distance to a state that is perfectly correlated between $\widehat{M}$ and the copy of the message in $R$.

\begin{figure}
	\centering
		\begin{tikzpicture}[auto]
		\tikzstyle{gate} = [fill=white, draw]
		\tikzstyle{syslabel} = [font=\footnotesize]

		\node[gate] (encoder) at (2,3) {$\mathcal{E}$};
		\node[gate] (channel) at (3.5,3) {$\mathcal{T}$};
		\node[gate] (decoder) at (5,3) {$\mathcal{D}$};

		\draw[double] (encoder) -- (1,3) -- (0.5,2) -- (1,1) -- (6,1);
		\draw[double] (encoder) to node[syslabel] {$A$} (channel);
		\draw (channel) to node[syslabel] {$E$} (decoder);
		\draw[double] (decoder) -- (6,3) node[syslabel,above left] {$\widehat{M}$};

		\node[syslabel,above right] at (1,3) {$M$};
		\node[syslabel,above right] at (1,1) {$R$};
		\node[left] at (0.5,2) {$\varphi^{\cl}_{MR}$};


		\end{tikzpicture}
		\caption{Diagram illustrating the transmission of classical data through a quantum channel. The state $\varphi^{\cl}_{MR}$ represents perfect classical correlations of a message $M$ and a reference $R$. The aim is to obtain after decoding a system $\widehat{M}$ with nearly perfect classical correlations to $R$.}
	\label{fig:classical-problem}
\end{figure}
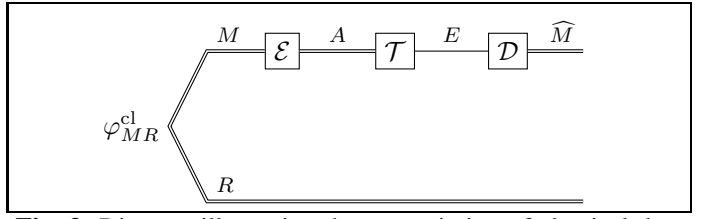



\subsection{The purified picture}
To apply the dequantizing theorem, it is necessary to work with pure states and operations. Hence, for our derivation we will consider the setup depicted in Figure \ref{fig:classical-problem}, but where all states and operations are replaced with the corresponding purifications.

The present subsection shows how the dequantizing theorem (Theorem~\ref{thm:dequantizing}) can be used to show the existence of a decoding operation; in the following subsection, we will apply the theorem to get the encoder.

First, we purify the state $\varphi_{MR}^{\cl}$ from above to $\ket{\varphi}_{MR}=\sum_{i}\sqrt{\lambda_i}\ket{ii}_{MR}$. Next, we will assume that the encoder $\mathcal{E}$ is actually a partial isometry $V_{M \rightarrow A}$; this is slightly less general, but it will turn out to be enough for our purposes. Then, we replace $\mathcal{T}$ by its Stinespring dilation $U^\cT_{A \rightarrow BXE}$, where $X$ contains a copy of the classical input, as explained in the preliminaries (Section \ref{prel}). Likewise, the decoder $\mathcal{D}$ becomes the partial isometry $U^{\mathcal{D}}_{B \rightarrow \widehat{M} E_{\mathcal{D}}}$, with an ``environment'' system $E_{\mathcal{D}}$. See Figure \ref{fig:classical-problem-purified} for an illustration of the purified picture.

Now, we will show that if dequantizing holds, then a suitable decoder must exist. Consider the two states $\rho$ and $\widehat{\rho}$ in Figure \ref{fig:classical-problem-purified}, which are the states immediately before and immediately after the decoder, and look at the reduced states on $R$, $E$ and $X$. Since these subsystems are untouched by the decoder, we have that $\rho_{RXE} = \widehat{\rho}_{RXE}$, and $\rho_{RXEB}$ and $\widehat{\rho}_{RXE \widehat{M} E_{\mathcal{D}}}$ are both purifications of this state. The decoder is then simply the partial isometry that relates them and which is guaranteed to exist by the unitary equivalence of purifications. Hence, as long as $\rho_{RXE}$ is of the right form, we know that a suitable decoder must exist.

We must now find out what this ``right form'' is. Note that since the encoder is classical, and the channel is CQ, one can show that $\rho_{RXE \widehat{M} E_{\mathcal{D}}}$ must have the form
\[ \ket{\rho}_{RXE \widehat{M} E_{\mathcal{D}}} = \sum_i \sqrt{\lambda_i} \ket{i}_R \otimes \ket{\pi(i)}_X \otimes \ket{\psi^{i}}_{E \widehat{M} E_{\mathcal{D}}}, \]
for some set of states $\{ \ket{\psi^i} \}$ and some permutation $\pi$. Furthermore, we know that the error probability must be low; this means that $\ket{\rho}$ must be close to a state of the form
\[ \ket{\xi}_{RXE \widehat{M} E_{\mathcal{D}}} = \sum_i \sqrt{\lambda_i} \ket{i}_R \otimes \ket{\pi(i)}_X \otimes \ket{i}_{\widehat{M}} \otimes \ket{\theta^{i}}_{E E_{\mathcal{D}}}, \]
where here the decoder output $\widehat{M}$ is perfectly correlated with the message in $R$ (and $\{ \ket{\theta^i} \}$ is some set of states). Tracing out $\widehat{M} E_{\mathcal{D}}$ in $\xi$, we get
\[ \xi_{RXE} = \sum_i \lambda_i \proj{i \pi(i)}{i \pi(i)}_{RX} \otimes \theta^i_E. \]
Note that this $\xi$ has only classical correlations between the three systems --- this is the ``right form'' that we need for the channel output.

Hence, we have reduced the problem to finding an encoder that ensures that the output of the channel has almost only classical correlations, and this is precisely what the dequantizing theorem does.

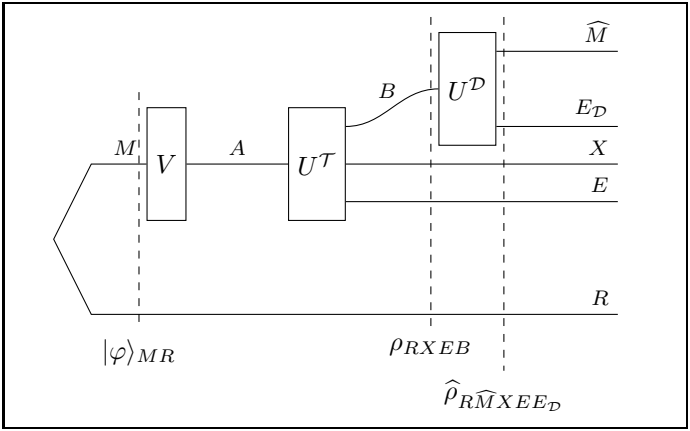
\begin{figure}
	\centering
		\def\Wall{8}
		\begin{tikzpicture}[auto]
		\tikzstyle{gate} = [fill=white, draw]
		\tikzstyle{syslabel} = [font=\footnotesize]

		\node[gate,minimum height=1.5cm] (encoder) at (2,3) {$V$};
		\node[gate,minimum height=1.5cm] (channel) at (4,3) {$U^{\mathcal{T}}$};
		\node[gate,minimum height=1.5cm] (decoder) at (6,4) {$U^{\mathcal{D}}$};

		\draw (encoder) -- ++(-1,0) -- ++(-.5,-1) -- ++(.5,-1) coordinate (bottomline) -- (\Wall,1);
		\draw (encoder) to node[syslabel] {$A$} (channel);
		\draw (channel.east)+(0,.5) to [out=0,in=180] node[syslabel,xshift=5] {$B$} (decoder);
		\draw (channel.east) -- (\Wall,3) node[syslabel,above left] {$X$};
		\draw (channel.east)+(0,-.5) -- (\Wall,2.5) node[syslabel,above left] {$E$};
		\draw (decoder.east)+(0,.5) -- (\Wall,4.5) node[syslabel, above left] {$\widehat{M}$};
		\draw (decoder.east)+(0,-.5) -- (\Wall,3.5) node[syslabel, above left] {$E_{\mathcal{D}}$};

		\node[syslabel,above left] at (encoder.west) {$M$};
		\node[syslabel,above left] at (\Wall,1) {$R$};

		\draw[dashed] ([xshift=-.1cm,yshift=.2cm] encoder.west |- encoder.north) -- ([xshift=-.1cm,yshift=-.2cm] encoder.west |- bottomline) node[below] {$\ket{\varphi}_{MR}$};
		\draw[dashed] ([xshift=-.1cm,yshift=.2cm] decoder.west |- decoder.north) -- ([xshift=-.1cm,yshift=-.2cm] decoder.west |- bottomline) node[below] {$\rho_{R X EB}$};
		\draw[dashed] ([xshift=.1cm,yshift=.2cm] decoder.east |- decoder.north) -- ([xshift=.1cm,yshift=-.75cm] decoder.east |- bottomline) node[below] {$\widehat{\rho}_{R \widehat{M}XE E_{\mathcal{D}}}$};

		\end{tikzpicture}
	     \caption{Diagram illustrating the completely purified scenario. Since $\cT$ is a CQ-channel the corresponding environment can be split up in two parts $X$ and $E$. }
	\label{fig:classical-problem-purified}
\end{figure}

\subsection{A one-shot classical coding theorem}
We now put the pieces together and derive a coding theorem based on the argument in the previous subsection. We need to obtain an encoding operation with the property that $\ptrace{B}{\bar{\cT}\circ\cE(\varphi_{MR})} \approx \sum_i \lambda_i \proj{i\pi(i)}{i\pi(i)}_{RX} \otimes \theta^{i}_E$. On the other hand, note that by sending the \emph{classically} correlated state $\varphi^{\cl}_{MR}$ through the channel (as opposed to $\ket{\varphi}_{MR}$), we automatically get a state of this form, regardless of the encoder. Our strategy will therefore be to show that
\begin{equation}
	\bar{\cT}\left(\mathcal{E}(\varphi_{MR})\right) \approx  {\bar{\cT}\left(\mathcal{E}(\varphi_{MR}^{\cl})\right)}.   \label{eqn:desired-condition}
\end{equation}

We are now in a position to use the dequantizing theorem. The encoder is constructed as follows: we first embed the message $M$ into the input space of the channel $A$. (We could denote this using a partial isometry, but to avoid cluttering the notation we will simply consider $\mathcal{H}_M$ to be a subspace of $\mathcal{H}_A$ from now on.) We then apply a permutation on the basis elements of $A$, as in Theorem \ref{thm:maintheorem}. We will then show that, if we average over the choice of permutations, this strategy works. It then follows that a suitable permutation exists.

Applying Theorem \ref{thm:maintheorem} to the scenario at hand, we get that there is a permutation operator such that
\begin{multline*}
\Norm{\bar{\mathcal{T}}(\idx{P}{A} (\varphi_{MR}-\varphi_{MR}^{\cl}) \idx{P}{A}^\dagger )}{1}\\
\leq\sqrt{\frac{1}{\idx{d}{A}-1}\:2^{-\chmineps{A}{EX}{\omega}-\chmin{M}{R}{\varphi}}}+8\varepsilon.
\end{multline*}
This gives precise bounds for (\ref{eqn:desired-condition}) above. We now get the decoder using Corollary~\ref{uhlcor}: there exists a TPCPM $\cD_{B\rightarrow\widehat{M}}$ such that
\begin{multline*}
	\Norm{\cD\left( U_\mathcal{T} \idx{P}{A} (\varphi_{MR}-\varphi_{MR}^{\cl}) \idx{P}{A}^\dagger U_{\cT}^{\dagger} \right)}{1}\\
\leq2\sqrt{\sqrt{\frac{1}{\idx{d}{A}-1}\:2^{-\chmineps{A}{EX}{\omega}-\chmin{M}{R}{\varphi}}}+8\varepsilon}.
\end{multline*}
Tracing out the systems $X$ and $E$, we get
\begin{multline*}
\Norm{\cD\circ\mathcal{T}(\idx{P}{A} \varphi_{MR} \idx{P}{A}^\dagger)-\varphi_{\widehat{M}R}^{\cl}}{1}\\
\leq2\sqrt{\sqrt{\frac{1}{\idx{d}{A}-1}\:2^{-\chmineps{A}{EX}{\omega}-\chmin{M}{R}{\varphi}}}+8\varepsilon}.
\end{multline*}

By the duality of the smooth min- and max-entropies~\cite{duality-min-max-entropy}, we have $-H_{\min}(M|R)_{\varphi} = H_{\max}(M)_{\varphi}$ and $-H_{\min}^{\varepsilon}(A|EX)_{\omega} = H_{\max}^{\varepsilon}(A|B)_{\omega}$, yielding the following theorem:
\begin{theorem} \label{oneshotHSW}
	Let $\mathcal{T}_{A \rightarrow B}$ be a quantum channel with Choi-Jamiołkowski representation $\omega_{AB}$, and let $\ket{\varphi}_{MR} = \sum_i \sqrt{\lambda_i} \ket{ii}_{MR}$, where $\lambda_i$ is a probability distribution. Then, we have that there exists a permutation $P_A$ on the basis elements of $A$ such that
\begin{multline}
\Norm{\cD\circ\mathcal{T}(\idx{P}{A} \varphi_{MR} \idx{P}{A}^\dagger)-\varphi_{\widehat{M}R}^{\cl}}{1}\\
\leq2\sqrt{\sqrt{\frac{1}{\idx{d}{A}-1}\:2^{H_{\max}^{\varepsilon}(A|B)_{\omega} + H_{\max}(M)_{\varphi}}}+8\varepsilon}. \label{eqn:coding-theorem-bound}
\end{multline}
\end{theorem}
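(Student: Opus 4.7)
The strategy is to apply the smoothed dequantizing theorem (Theorem~\ref{thm:maintheorem}) to $\varphi_{MR}$, with $\cH_M$ viewed as a subspace of $\cH_A$, and then invoke Corollary~\ref{uhlcor} on the resulting decoupled environment to produce the decoder $\cD$. Taking smoothing parameter $\eps$ on the channel term and $0$ on the state term (harmless because $\varphi_{MR}$ is already pure), the permutation-averaged bound of Theorem~\ref{thm:maintheorem} implies, by a pigeonhole argument, the existence of a specific permutation $P_A$ with
\begin{equation*}
\Norm{\bar{\cT}\bigl(P_A(\varphi_{MR}-\varphi_{MR}^{\cl})P_A^\dagger\bigr)}{1}\leq \sqrt{\tfrac{1}{d_A-1}\,2^{-\chmineps{A}{EX}{\bar\omega}-\chmin{M}{R}{\varphi}}}+8\eps,
\end{equation*}
where $\bar\omega$ denotes the Choi-Jamiołkowski representation of the complementary channel $\bar{\cT}_{A\rightarrow EX}$.

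Next, I would convert this closeness on the environment into closeness of the decoded state, exactly as discussed in Section~\ref{sec:proof-hsw}. The state $\bar{\cT}(P_A\varphi_{MR}^{\cl}P_A^\dagger)$ is already of the purely classical form $\sum_i \lambda_i\proj{i\pi(i)}{i\pi(i)}_{RX}\otimes\theta^{i}_E$, so it admits a purification in which an additional system $\widehat{M}$ is perfectly correlated with $R$. The Stinespring output $U^{\cT}P_A\ket{\varphi}_{MR}$ purifies the actual (non-ideal) channel output on $BEXR$. Feeding these two purifications into Corollary~\ref{uhlcor} produces an isometry $U^{\cD}_{B\to\widehat{M}E_\cD}$ which realizes a decoder $\cD$ such that
\begin{equation*}
\Norm{\cD\circ\cT(P_A\varphi_{MR}P_A^\dagger)-\varphi_{\widehat{M}R}^{\cl}}{1}\leq 2\sqrt{\sqrt{\tfrac{1}{d_A-1}\,2^{-\chmineps{A}{EX}{\bar\omega}-\chmin{M}{R}{\varphi}}}+8\eps}.
\end{equation*}

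Finally, I would convert the entropies via duality of smooth min- and max-entropies~\cite{duality-min-max-entropy}. Since the Stinespring dilation $U^{\cT}$ produces a state purifying the Choi reference $A$ across $BEX$, one obtains $-\chmineps{A}{EX}{\bar\omega}=\chmaxeps{A}{B}{\omega}$ with $\omega$ the Choi of $\cT$; analogously, purity of $\varphi_{MR}$ yields $-\chmin{M}{R}{\varphi}=\hmax{M}{\varphi}$. Substituting these identifications into the previous display gives precisely~\eqref{eqn:coding-theorem-bound}.

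The main obstacle I anticipate is not analytic but structural: verifying that the CQ-coherence hypotheses of Theorem~\ref{thm:maintheorem} actually hold in the HSW setting. One must check that $\bar{\cT}$ is genuinely the complementary channel of a CQ channel, that $\varphi_{MR}$ is in Schmidt form with respect to the basis permuted by the $P_A$'s, and crucially that these structural features survive smoothing (which is the role of Lemma~\ref{marcoslemma}). Once this bookkeeping is done, the theorem is a mechanical concatenation of the three displayed bounds above.
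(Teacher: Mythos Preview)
Your proposal is correct and follows essentially the same route as the paper: apply the smoothed dequantizing theorem with state smoothing set to zero and channel smoothing $\eps$, extract a specific permutation from the expectation bound, invoke Corollary~\ref{uhlcor} to obtain the decoder, trace out $EX$, and finish with the min/max duality relations. The structural bookkeeping you flag (CQ-complementarity of $\bar{\cT}$, Schmidt form of $\varphi_{MR}$, preservation under smoothing via Lemma~\ref{marcoslemma}) is exactly what the paper relies on implicitly from the setup in Section~\ref{sec:proof-hsw}.
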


Hence, if the right-hand side of the above inequality is small enough, then the scheme succeeds. We formulate this fact as the following corollary of the preceding theorem:

\begin{corollary}\label{cor:oneshot-rates}
Let $\cTAE$ be a quantum channel with Choi-Jamiołkowski representation $\idx{\omega}{AB}$, and let $\mathcal{M}$ be a set of messages with $\abs{\mathcal{M}} \leq \idx{d}{A}$ and $p$ a probability distribution on $\mathcal{M}$. Then, there exists an encoder and a decoder for $\cTAE$ with error probability $p_e \geq 0$ if
\begin{multline*}
  H_{\max}(M)_{p} \leq\\ \log \idx{d}{A} - H_{\max}^{\eps}(A|B)_{\omega} - 1 + 2 \log \Big( p_e^2 - 8\eps \Big).
\end{multline*}
for some constant $\eps$, $0 \leq \eps \leq \frac{p_e^2}{8}$.
\end{corollary}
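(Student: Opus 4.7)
The plan is to derive Corollary~\ref{cor:oneshot-rates} by directly inverting the bound of Theorem~\ref{oneshotHSW}; no new technical input is needed. First, I would identify the classical message source with the purified state used in the theorem: embedding $\cM$ into an orthonormal basis of $\cH_M$ and setting $\lambda_i = p(i)$ gives $\ket{\varphi}_{MR} = \sum_i \sqrt{p(i)}\,\ket{ii}_{MR}$, so that $\varphi_M$ carries the distribution $p$ on its diagonal and $H_{\max}(M)_\varphi = H_{\max}(M)_p$. The operational error probability is linked to the quantity bounded in Theorem~\ref{oneshotHSW} through
\begin{equation*}
\left\| \cD\circ\cT(P_A\,\varphi_{MR}\,P_A^\dagger) - \varphi^{\cl}_{\widehat{M}R} \right\|_1 \leq 2 p_e ,
\end{equation*}
so achievability at error level $p_e$ is implied by requiring the right-hand side of~\eqref{eqn:coding-theorem-bound} to be at most $2 p_e$.

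Next, I would algebraically invert this requirement. Dividing by two and squaring once gives
\begin{equation*}
\sqrt{(d_A-1)^{-1}\,2^{H_{\max}^{\eps}(A|B)_\omega + H_{\max}(M)_\varphi}} + 8\eps \leq p_e^2 ;
\end{equation*}
squaring a second time and rearranging yields
\begin{equation*}
2^{H_{\max}^{\eps}(A|B)_\omega + H_{\max}(M)_\varphi} \leq (d_A - 1)\,(p_e^2 - 8\eps)^2 .
\end{equation*}
For the right-hand side to be meaningful we need $p_e^2 - 8\eps \geq 0$, which is exactly the restriction $0 \leq \eps \leq p_e^2/8$ stated in the corollary. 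Taking $\log_2$ and isolating $H_{\max}(M)_\varphi$ produces
\begin{equation*}
H_{\max}(M)_\varphi \leq \log(d_A - 1) - H_{\max}^{\eps}(A|B)_\omega + 2\log\bigl(p_e^2 - 8\eps\bigr) .
\end{equation*}

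Finally, I would replace $\log(d_A - 1)$ by the slightly weaker $\log d_A - 1$, valid for every $d_A \geq 2$ since $d_A - 1 \geq d_A/2$ in that regime (the $d_A = 1$ case is degenerate: $|\cM| \leq 1$ leaves nothing to transmit). This reproduces verbatim the inequality in the statement of the corollary. I do not anticipate any real obstacle beyond careful bookkeeping: tracking the factor of two between trace distance and error probability, ensuring that the logarithm on the right-hand side is well defined under the stated range for $\eps$, and noting the loose step $\log(d_A - 1) \geq \log d_A - 1$ that absorbs the ``$-1$'' into the cleaner final expression.
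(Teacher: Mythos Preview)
Your proposal is correct and follows exactly the route the paper takes: the paper's own proof is a single sentence stating that the inequality in the corollary guarantees the right-hand side of~\eqref{eqn:coding-theorem-bound} is at most $2p_e$, and you have simply written out that algebraic inversion explicitly, including the relaxation $\log(d_A-1)\geq\log d_A-1$.
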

\begin{IEEEproof}
	The abvoe inequality ensures that the right-hand side of (\ref{eqn:coding-theorem-bound}) is at most $2p_e$.
\end{IEEEproof}



\subsection{The i.i.d scenario and the HSW theorem}\label{sec:iid}
Applying the above to a channel of the form $\mathcal{T}_{A \rightarrow B}^{\otimes n}$ allows us to easily recover the HSW theorem. Recall that the HSW theorem states that there exists a family of codes for $\mathcal{T}$ with increasing block length $n$ with a vanishing error probability as $n \rightarrow \infty$ as long as the rate $Q$ is less than $I(X,B)_{\tau}$, where $\tau$ is a state of the form $\tau_{XB} = \sum_x p_x \proj{x}{x}_X \otimes \mathcal{T}(\sigma^x_A)$, where $p_x$ forms a probability distribution. (The rate of a code for $n$ uses of a channel is $\frac{1}{n} \log K$, where $K$ is the number of possible messages sent.) The only challenge facing us when attempting to prove this is to relate the quantity $\log d_A$ to $nH(X)_{\tau}$. We do this using the idea of \emph{types}, explained very briefly in Appendix \ref{sec:method-of-types}. The result is the following theorem:

\begin{theorem}[Holevo \cite{holevo98}, Schumacher-Westmoreland \cite{SW97}]
	Let $\mathcal{T}_{A \rightarrow B}$ be a CQ channel, let $q$ be a probability distribution over the set $\mfX$, and let $\tau_{XB} := \sum_{x \in \mfX} q(x) \proj{x}{x}_X \otimes \mathcal{T}(\sigma_A^x)$, where $\{ \sigma_A^x : x \in \mfX \}$ is a set of states on $A$. Then, there exists a family of codes for $\mathcal{T}^{\otimes n}$ whose rate approaches $I(X;B)_{\tau}$.
\end{theorem}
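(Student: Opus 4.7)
The plan is to reduce the i.i.d.\ coding problem to a single-shot application of Corollary~\ref{cor:oneshot-rates} by constructing an auxiliary CQ channel and restricting its input to a type class. First, I would define $\tilde{\mathcal{T}}_{X\rightarrow B}$ to be the CQ channel sending $|x\rangle\langle x|_X \mapsto \mathcal{T}(\sigma_A^x)$ for each $x \in \mfX$; absorbing the choice of codewords $\{\sigma_A^x\}$ into this auxiliary channel reduces the problem to one of sending classical information through $\tilde{\mathcal{T}}^{\otimes n}$, which is exactly the situation addressed by Corollary~\ref{cor:oneshot-rates}.

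Next I would shrink the effective input dimension using the method of types. For each blocklength $n$, pick a type $q_n$ that is realizable at blocklength $n$ and satisfies $q_n \to q$, and let $T_n \subseteq \mfX^n$ be its type class. Restricting the input of $\tilde{\mathcal{T}}^{\otimes n}$ to the subspace spanned by $\{|x^n\rangle : x^n \in T_n\}$ yields a CQ channel $\tilde{\mathcal{T}}_n$ of input dimension $d_n := |T_n|$, where standard type counting gives $(n+1)^{-|\mfX|}\, 2^{nH(q_n)} \leq d_n \leq 2^{nH(q_n)}$. Let $\omega_n$ denote the Choi-Jamio\l{}kowski representation of $\tilde{\mathcal{T}}_n$.

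The main technical step is to bound $H_{\max}^{\eps}(X^n|B^n)_{\omega_n}$. A short calculation shows that $\omega_n$ equals an appropriately renormalized projection of $\tau_n^{\otimes n}$ onto the $T_n$ subspace of $X^n$, where $\tau_n$ is defined from $q_n$ in place of $q$: every $x^n \in T_n$ carries probability $2^{-nH(q_n)}$ under $q_n^{\otimes n}$, and comparing this with the uniform weight $d_n^{-1}$ shows that the two states agree up to a factor that is at most polynomial in $n$. Combined with the QAEP of Equation~\eqref{QAEP} and the stability of the smooth max-entropy under small perturbations in the purified distance, this yields $\tfrac{1}{n} H_{\max}^{\eps}(X^n|B^n)_{\omega_n} \leq H(X|B)_{\tau_n} + o(1)$.

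Finally, I would apply Corollary~\ref{cor:oneshot-rates} to $\tilde{\mathcal{T}}_n$ with a uniform message distribution on $\mathcal{M}$ and smoothing and error parameters $\eps_n, p_{e,n}$ both vanishing slowly as $n \to \infty$. The corollary allows any rate $\tfrac{1}{n}\log|\mathcal{M}|$ at most $\tfrac{1}{n}\log d_n - \tfrac{1}{n} H_{\max}^{\eps_n}(X^n|B^n)_{\omega_n} - o(1)$, and substituting the two bounds above gives an achievable rate of $H(q_n) - H(X|B)_{\tau_n} - o(1)$, which by continuity of the von Neumann entropy converges to $H(X)_\tau - H(X|B)_\tau = I(X;B)_\tau$ as $n \to \infty$. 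The main obstacle I foresee is the type-class bookkeeping needed to identify $\omega_n$ with a renormalized projected copy of $\tau_n^{\otimes n}$ and to track the accumulating smoothing errors so that all $\eps$-terms vanish while the leading $nI(X;B)$ term is preserved; each individual step is standard but they must be coordinated carefully.
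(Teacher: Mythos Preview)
Your overall architecture matches the paper's proof closely: restrict the input of the induced CQ channel $x\mapsto\mathcal{T}(\sigma_A^x)$ to a single type class, apply Corollary~\ref{cor:oneshot-rates} with the uniform message distribution, use type counting for $\log d_{A'}$, bound the smooth max-entropy of the restricted Choi state against that of $\tau^{\otimes n}$, and finish with the QAEP. The paper does exactly this, with the cosmetic difference that it takes $p$ to be the \emph{most likely} type under $q^n$ rather than a sequence $q_n\to q$; this avoids your final continuity step and, more importantly, lets everything be phrased directly in terms of the fixed state $\tau$, so that the QAEP of Equation~\eqref{QAEP} applies verbatim rather than to a moving sequence $\tau_n^{\otimes n}$.

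There is, however, a genuine gap in your ``main technical step''. You write that $\omega_n$ and the type-class projection of $\tau_n^{\otimes n}$ agree up to a polynomial factor and then appeal to ``stability of the smooth max-entropy under small perturbations in the purified distance''. That is the wrong lemma: $\omega_n$ is \emph{not} close to $\tau_n^{\otimes n}$ in purified distance (the projected state has support on an exponentially small subspace of $X^n$), so no continuity bound in $P(\cdot,\cdot)$ gets you from one to the other. What you need is monotonicity of $H_{\max}^\eps$ under pinching/projection on the $X$ register, i.e.\ Lemma~\ref{lem:hmax-proj}, together with the observation that rescaling a subnormalized state by a factor $c$ shifts $H_{\max}$ by $\log c$. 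The paper invokes precisely this: from $\omega'=\Pi_{t(p)}\tau^{\otimes n}\Pi_{t(p)}/\tr[\Pi_{t(p)}\tau^{\otimes n}]$ and Lemma~\ref{lem:hmax-proj} one obtains
\[
H_{\max}^{\eps}(A'|B^n)_{\omega'}\ \le\ H_{\max}^{\eps}(X^n|B^n)_{\tau^{\otimes n}}\ -\ \log\tr\!\big[\Pi_{t(p)}\tau^{\otimes n}\big],
\]
and then the choice of $p$ as the most likely type guarantees $\tr[\Pi_{t(p)}\tau^{\otimes n}]\ge |\mathcal{P}_n(\mfX)|^{-1}$, so the correction is only $O(\log n)$. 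Replace your purified-distance stability appeal by this projection argument and your proof goes through; your variant with $q_n\to q$ then also works, provided you use a finite-$n$ form of the QAEP (as in~\cite{tcr08}) rather than the bare limit statement~\eqref{QAEP}, since your underlying state $\tau_n$ varies with $n$.
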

\begin{proof}
	Let $p \in \mathcal{P}_n(\mfX)$ be the most likely type under the distribution $q^n$, and let $A'$ be a system with $\mathcal{H}_{A'} \subseteq \mathcal{H}_A^{\otimes n}$ of dimension $|t(p)|$ defined as
	\[ \mathcal{H}_{A'} := \Span \{ \ket{\vec{x}}_{A^n} : \vec{x} \in t(p) \}. \]
	Now, consider the channel $\mathcal{T}'_{A' \rightarrow B^{\otimes n}}$ defined as follows:
	\[ \mathcal{T}'(\xi_{A'}) = \sum_{\vec{x} \in t(p)} \bra{\vec{x}} \xi \ket{\vec{x}} \mathcal{T}^{\otimes n}(\sigma_{\vec{x}}), \]
	where $\sigma_{\vec{x}} = \sigma_{x_1} \otimes \dots \otimes \sigma_{x_n}$. Furthermore, let $\omega'_{A'B^n}$ be the Choi-Jamiołkowski state of $\mathcal{T}'$, and let us apply Corollary \ref{cor:oneshot-rates} to the uniform distribution over some message set $\mathcal{M}$ and channel $\mathcal{T}'$.  We get that there exists an encoder and a decoder such that 
\begin{align*}
	\log |\mathcal{M}| &\leqslant \log d_{A'} - H_{\max}^{\varepsilon'}(A'|B^n)_{\omega} - o(n)\\
	&= \log |t(p)| - H_{\max}^{\varepsilon'}(A'|B^n)_{\omega} - o(n)\\
\end{align*}
From the properties of types, we have that $|t(p)| \geqslant |\mathcal{P}_n(\mfX)|^{-1} 2^{nH(X)_{\tau}}$. Also, note that 
\[ \omega_{A'B^n} = \Pi_{t(p)} \tau_{XB}^{\otimes n} \Pi_{t(p)}/\tr[\Pi_{t(p)} \tau_{XB}^{\otimes n}]. \]
Using  Lemma \ref{lem:hmax-proj}, we get that 
\[ H_{\max}^{\varepsilon}(A'|B^n)_{\omega'} \leqslant H_{\max}^{\varepsilon}(X^n|B^n)_{\tau^{\otimes n}} + \log \tr[\Pi_{t(p)} \rho_{XB}^{\otimes n}]. \]
Hence, we now have that
\begin{multline*}
	\log |\mathcal{M}| \leqslant \log|t(p)|\\
- H_{\max}^{\varepsilon}(X^n|B^n)_{\tau^{\otimes n}} - \log\tr[\Pi_{t(p)} \tau_{XB}^{\otimes n}] - o(n).
\end{multline*}
Choosing $\log |\mathcal{M}| = nQ$ for a transmission rate of $Q$ and using the above bound on $|t(p)|$ and the fact that the most likely type $p$ satisfies $\tr[\Pi_{t(p)} \tau_{XB}^{\otimes n}] \geqslant |\mathcal{P}_n(\mfX)|^{-1}$, this bound becomes:
\begin{equation*}
	Q \leqslant H(X)_{\tau} - \frac{1}{n} H_{\max}^{\varepsilon}(X^n|B^n)_{\tau^{\otimes n}} - o(1).
\end{equation*}

Taking the limit as $n \rightarrow \infty$, we get that this bound is satisfied whenever
\[ Q < H(X)_{\tau} - H(X|B)_{\tau} = I(X;B)_{\tau}, \]
where we have used the fully quantum asymptotic equipartition property of \cite{tcr08} to bound the $H_{\max}$ term above. Since this is true for any $\varepsilon > 0$, the theorem holds.
\end{proof}

\section{Conclusion and further work}\label{sec:discussion}
In this article, we show that it is possible to derive direct bounds for the capacity of classical-quantum channels using decoupling-like techniques, therefore adding the transmission of classical data to the list of problems that are amenable to the decoupling approach to coding. Our derivation also naturally leads to bounds in the one-shot setting, where the channel is only used once and we allow a finite error probability.

We want to emphasize, however, that the bounds resulting from our calculation are somewhat weaker than the best known one-shot direct bounds, found for example in Mosonyi and Datta~\cite{md09} and Wang and Renner~\cite{wr10}. Furthermore, our one-shot result only applies to uniform inputs of the channel and the method of types is needed to shape the input into this form in order to achieve the HSW capacity. The latter weakness could potentially be overcome inside the decoupling framework, using an analogue of Theorem~3.14 in~\cite{fred-these}.

\appendices

\section{Technical facts about the smooth entropies}\label{A}

Here we establish some useful properties of the (smooth) min-entropy of classically coherent states. In particular for a state $\rho_{XX'AB}$ that is classically coherent between $X$ and $X\rq{}$ we show that the state $\sigma_{X\rq{}B}$ that optimizes
\begin{multline*}
H_{\min}(XA|X'B)_{\rho}\\
=\max_{\sigma_{X\rq{}B}}\sup \big\{ \lambda \in \mathbb{R} : \tilde{\rho}_{XX'AB} \leq 2^{-\lambda} \id_{XA} \otimes \sigma_{X'B} \big\}
\end{multline*}
can be chosen to have CQ-structure. Furthermore we show how the min-entropy of a classically coherent state behaves under smoothing. The following lemma is a direct consequence of the results obtained in~\cite{duality-min-max-entropy}.
\begin{lemma}\label{marcoslemma}
  Let $\rho_{XX'AB}$ be coherent classical on $X$ and $X'$. Then, there exists a state $\tilde{\rho}_{XX'AB} \in \mathcal{B}^{\eps}(\rho_{XX'AB})$ that is coherent classical on $X$ and $X'$ and a state $\sigma_{X'B} \in \mathcal{S}_{\leq}(\mathcal{H}_{X'B})$ that is classical on $X'$ such that
  \begin{multline*}
    H_{\min}^{\eps}(XA|X'B)_{\rho}\\ 
      = \sup \big\{ \lambda \in \mathbb{R} : \tilde{\rho}_{XX'AB} \leq 2^{-\lambda} \id_{XA} \otimes \sigma_{X'B} \big\} .
  \end{multline*}
\end{lemma}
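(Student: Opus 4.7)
The plan is to establish the two structural conditions---coherent classicality of $\tilde{\rho}$ and classicality on $X'$ of $\sigma$---by applying suitable dephasing channels to an arbitrary pair of optimizers for the smooth min-entropy. For the coherent classicality, I would use the pinching channel $\mathcal{P}_{XX'}(\omega) := P_{XX'}\,\omega\, P_{XX'} + (\id-P_{XX'})\,\omega\,(\id-P_{XX'})$ associated with the projector $P_{XX'}$. Since $\rho$ commutes with $P_{XX'}$ by hypothesis, $\mathcal{P}_{XX'}$ fixes $\rho$, so by monotonicity of the purified distance under CPTP maps any candidate smoothing $\tilde{\rho}' \in \mathcal{B}^{\eps}(\rho)$ satisfies $\tilde{\rho} := \mathcal{P}_{XX'}(\tilde{\rho}') \in \mathcal{B}^{\eps}(\rho)$, and $\tilde{\rho}$ automatically commutes with $P_{XX'}$.

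For the classicality of $\sigma$, given that $\tilde{\rho}$ is coherent classical and some $\sigma^{*}$ witnesses the min-entropy via $\tilde{\rho} \leq 2^{-\lambda}\,\id_{XA} \otimes \sigma^{*}$, I would replace $\sigma^{*}$ by its $X'$-dephased version $\sigma := \Delta_{X'}(\sigma^{*})$, which is manifestly classical on $X'$ and has the same trace. The verification that $\tilde{\rho} \leq 2^{-\lambda}\,\id_{XA} \otimes \sigma$ still holds proceeds block-by-block with respect to $P_{XX'}$: on the diagonal block, a direct computation yields
$$ P_{XX'}(\id_{X} \otimes \sigma^{*})P_{XX'} = P_{XX'}(\id_{X} \otimes \sigma)P_{XX'} = \sum_{z} \proj{zz}{zz}_{XX'} \otimes \bra{z}\sigma^{*}\ket{z}_{X'B}, $$
and because $\id_{X} \otimes \sigma$ commutes with $P_{XX'}$, the block inequality lifts to the full operator inequality on this block. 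On the off-diagonal block the verification is immediate in the common case---relevant to the applications developed in this paper---where $\rho$ is supported on the image of $P_{XX'}$, because then $\tilde{\rho}$ can be chosen with vanishing off-diagonal block.

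The main obstacle will be the clean handling of the off-diagonal block of $P_{XX'}$ in the fully general case, since $\Delta_{X'}(\sigma^{*})$ does not dominate $\sigma^{*}$ on that block in general. A cleaner resolution would invoke the smooth min/max-entropy duality of~\cite{duality-min-max-entropy} to translate the problem into an analogous structural question about the smooth max-entropy, where the coherent classical structure of $\rho$ transfers naturally to fidelity-based optimizations and both blocks can be handled uniformly.
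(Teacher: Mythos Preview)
Your strategy is essentially the paper's, and your diagonal-block computation matches it exactly. The gap you flag on the off-diagonal block is real for the pinching map you chose, but the paper resolves it by a simpler choice that you overlook: instead of the full pinching $\mathcal{P}_{XX'}(\cdot)=P(\cdot)P+(\id-P)(\cdot)(\id-P)$, it uses the bare compression $\tilde{\rho}:=P_{XX'}\,\widehat{\rho}\,P_{XX'}$. The purified distance is monotone not only under CPTP maps but under all trace non-increasing CP maps (this is the ``monotonicity under projections'' invoked in~\cite{duality-min-max-entropy}), so $\tilde{\rho}\in\mathcal{B}^{\eps}(\rho)$ still holds, and $\tilde{\rho}$ clearly commutes with $P_{XX'}$. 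Crucially, this $\tilde{\rho}$ lives entirely on the image of $P_{XX'}$, so there is no off-diagonal block to worry about: from $\widehat{\rho}\le 2^{-\lambda}\id_{XA}\otimes\widehat{\sigma}$ one gets
\[
\tilde{\rho}\le 2^{-\lambda}\,P_{XX'}(\id_{XA}\otimes\widehat{\sigma})P_{XX'}
=2^{-\lambda}\sum_x \proj{x}{x}_X\otimes\id_A\otimes\proj{x}{x}_{X'}\otimes\bra{x}\widehat{\sigma}\ket{x}
\le 2^{-\lambda}\,\id_{XA}\otimes\sigma,
\]
with $\sigma:=\sum_x\proj{x}{x}_{X'}\otimes\bra{x}\widehat{\sigma}\ket{x}$ classical on $X'$ and $\tr\sigma\le 1$.

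Your proposed workarounds are therefore unnecessary. In particular, the remark that ``$\tilde{\rho}$ can be chosen with vanishing off-diagonal block'' when $\rho$ is supported on the image of $P_{XX'}$ is not actually delivered by your pinching (the optimizer $\widehat{\rho}$ need not share the support of $\rho$, so $(\id-P)\widehat{\rho}(\id-P)$ can be nonzero); it is delivered precisely by switching to the bare compression, which works without any support hypothesis on $\rho$. The detour through duality is likewise avoidable.
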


\begin{IEEEproof}
  Let $\widehat{\rho}_{XX'AB} \in \mathcal{B}^{\eps}(\rho_{XX'AB})$ be a state that maximizes the smooth min-entropy, namely it satisfies $H_{\min}^{\eps}(XA|X'B)_{\rho}$ $=$ $H_{\min}(XA|X'B)_{\widehat{\rho}}$. Then, the state $\tilde{\rho}_{XX'AB} = P_{XX'} \bar{\rho}_{XX'AB} P_{XX'}$ satisfies the criteria.
  
  First, note that we have $P(\tilde{\rho}_{XX'AB}, \rho_{XX'AB}) \leq P(\widehat{\rho}_{XX'AB}, \rho_{XX'AB}) \leq \eps$ due to the monotonicity of the purified distance under projections~\cite{duality-min-max-entropy}.
  Second, by definition of the smooth min-entropy, 
  there exists an operator $\widehat{\sigma}_{X'B}$ such that, for $\lambda = 
  H_{\min}^{\eps}(XA|X'B)$, we have
  \begin{align*}
    \widehat{\rho}_{XX'AB} \leq 2^{-\lambda}\, \id_{XA} \otimes \widehat{\sigma}_{X'B} \,.
  \end{align*} 

Thus,
\begin{align}
    \rhot_{XX'AB} &\leq 2^{-\lambda}\, P_{XX'} \big( \id_{XA} \otimes \widehat{\sigma}_{X'B} \big) P_{XX'} \nonumber\\
    &= 2^{-\lambda} \sum_x \proj{x}{x}_{X} \otimes \id_{A} 
      \otimes \proj{x}{x}_{X'} \otimes \bracket{x}{\widehat{\sigma}_{X'B}}{x} \nonumber\\
    &\leq 2^{-\lambda}\, \id_{XA} \otimes \underbrace{\sum_x \proj{x}{x}_{X'} \kron \bracket{x}{\widehat{\sigma}_{X'B}}{x}}_{=: \sigma_{X'B}}
    \label{eq:marcoslemma} \,.
  \end{align}
  Finally, we note that $\tr(\sigma_{X'B}) \leq 1$ and, thus, Eq.~\eqref{eq:marcoslemma} implies that $H_{\min}^{\eps}(XA|X'B)_{\rho} \geq \lambda$, which concludes the proof.
\end{IEEEproof}

\begin{lemma}\label{lem:hmax-proj}
	Let $\rho_{AB} \in \subnormstates{\cH_{AB}}$, let $\Pi_A$ be an operator such that $0 \leqslant \Pi_A \leqslant \ident_A$, and let $\varepsilon \geqslant 0$. Furthermore, let $\rho'_{AB} := \Pi_A \rho_{AB} \Pi_A$. Then, $H^{\varepsilon}_{\max}(A|B)_{\rho'} \leqslant H^{\varepsilon}_{\max}(A|B)_{\rho}$.
\end{lemma}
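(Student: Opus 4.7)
The plan is to reduce the claim to an inequality between conditional smooth min-entropies by invoking the duality $\chmaxeps{A}{B}{\cdot} = -\chmineps{A}{C}{\cdot}$ from \cite{duality-min-max-entropy}, where $C$ is a purifying system. Fix a purification $\ket{\rho}_{ABC}$ of $\rho_{AB}$; then $\ket{\rho'}_{ABC} := (\Pi_A \otimes \ident_{BC}) \ket{\rho}_{ABC}$ is a subnormalized purification of $\rho'_{AB}$, and tracing out $B$ gives $\rho'_{AC} = \Pi_A \rho_{AC} \Pi_A$. It therefore suffices to establish the reverse inequality $\chmineps{A}{C}{\rho'} \geq \chmineps{A}{C}{\rho}$.

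First I would treat the unsmoothed case $\eps = 0$. If $\rho_{AC} \leq 2^{-\lambda}\,\ident_A \otimes \sigma_C$ holds for some $\lambda \in \mathbb{R}$ and some $\sigma_C \in \normstates{\cH_C}$, conjugating both sides by the positive operator $\Pi_A \otimes \ident_C$ yields $\rho'_{AC} \leq 2^{-\lambda}\, \Pi_A^2 \otimes \sigma_C$. Since the hypothesis $0 \leq \Pi_A \leq \ident_A$ ensures that the eigenvalues of $\Pi_A$ lie in $[0,1]$, we have $\Pi_A^2 \leq \ident_A$, so the same $\sigma_C$ and the same $\lambda$ remain feasible for $\rho'_{AC}$. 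Taking the supremum over $(\lambda, \sigma_C)$ gives $\chmin{A}{C}{\rho'} \geq \chmin{A}{C}{\rho}$.

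To lift the inequality through the smoothing, let $\widehat{\rho}_{AC}$ be a state that attains the maximum in $\chmineps{A}{C}{\rho}$, so that $P(\widehat{\rho}_{AC}, \rho_{AC}) \leq \eps$ and $\chmin{A}{C}{\widehat{\rho}} = \chmineps{A}{C}{\rho}$, and set $\widehat{\rho}'_{AC} := \Pi_A \widehat{\rho}_{AC} \Pi_A$. The map $X \mapsto \Pi_A X \Pi_A$ is completely positive and trace non-increasing on $\subnormstates{\cH_{AC}}$, so monotonicity of the purified distance under such maps, as established in \cite{duality-min-max-entropy}, yields $P(\widehat{\rho}'_{AC}, \rho'_{AC}) \leq \eps$. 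Thus $\widehat{\rho}'_{AC}$ is a valid smoothing candidate for $\rho'_{AC}$, and applying the unsmoothed bound to $\widehat{\rho}$ gives
\begin{align*}
\chmineps{A}{C}{\rho'} \geq \chmin{A}{C}{\widehat{\rho}'} \geq \chmin{A}{C}{\widehat{\rho}} = \chmineps{A}{C}{\rho} .
\end{align*}
Re-applying the min--max duality then delivers the desired bound $\chmaxeps{A}{B}{\rho'} \leq \chmaxeps{A}{B}{\rho}$.

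The main obstacle I anticipate is the careful justification of the monotonicity of the purified distance under the non-trace-preserving map $X \mapsto \Pi_A X \Pi_A$: one has to verify that it sends $\subnormstates{\cH_{AC}}$ into itself and that the version of monotonicity proved for TPCPMs in \cite{duality-min-max-entropy} really does extend to trace non-increasing CP maps via the generalized fidelity. Once this technical point is in place, the rest of the argument is a short chain of definitional manipulations.
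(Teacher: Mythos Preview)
Your proposal is correct and takes a genuinely different route from the paper. The paper works directly with the fidelity characterization of the max-entropy, $2^{H_{\max}(A|B)_\rho} = \max_{\sigma_B} F(\rho_{AB}, \ident_A \otimes \sigma_B)^2$: it picks the optimal smoothing state $\tilde{\rho}$ for $H^{\eps}_{\max}(A|B)_\rho$, pushes it through $\Pi_A$ to obtain $\tilde{\rho}' \in \mathcal{B}_\eps(\rho')$ (using the same purified-distance monotonicity you flag), and then uses the operator monotonicity of the square root to show $F(\tilde{\rho},\ident_A\otimes\omega_B) \geq F(\tilde{\rho},\Pi_A^2\otimes\omega_B) = F(\tilde{\rho}',\ident_A\otimes\omega_B)$. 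You instead invoke the duality $H^{\eps}_{\max}(A|B) = -H^{\eps}_{\min}(A|C)$ that the paper actually uses as its \emph{definition} of the max-entropy, and then argue on the min-entropy side via the elementary operator inequality $\rho'_{AC} = \Pi_A \rho_{AC} \Pi_A \leq 2^{-\lambda}\,\Pi_A^2 \otimes \sigma_C \leq 2^{-\lambda}\,\ident_A \otimes \sigma_C$.

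Both arguments are short and share the same technical hinge (monotonicity of the purified distance under the trace-non-increasing CP map $X \mapsto \Pi_A X \Pi_A$, which is indeed established in \cite{duality-min-max-entropy}). Your route has the advantage of staying within the paper's own definitions without importing the fidelity formula for $H_{\max}$; the paper's route avoids introducing the purifying system $C$ explicitly. The obstacle you single out is real but already handled in the reference you cite, so no gap remains.
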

\begin{IEEEproof}
	Let $\tilde{\rho}_{AB} \in \mathcal{B}_{\varepsilon}(\rho)$ and $\sigma_B$ be such that
	\[ 2^{H_{\max}^{\varepsilon}(A|B)_{\rho}} = F(\tilde{\rho}_{AB}, \ident_A \otimes \sigma_B)^2. \]
	Let $\tilde{\rho}' := \Pi \tilde{\rho} \Pi \in \mathcal{B}_{\varepsilon}(\rho')$, and let $\omega_B$ be such that
	\[ 2^{H_{\max}(A|B)_{\tilde{\rho}'}} = F(\tilde{\rho}'_{AB}, \ident_A \otimes \omega_B)^2. \]
	Then, we immediately have that
	\begin{align*}
		2^{H^{\varepsilon}_{\max}(A|B)_{\rho}} &= F(\tilde{\rho}_{AB}, \ident_A \otimes \sigma_B)^2\\
		&\geqslant F(\tilde{\rho}_{AB}, \ident_A \otimes \omega_B)^2\\
		&\geqslant F(\tilde{\rho}_{AB}, \Pi_A^2 \otimes \omega_B)^2\\
		&= \tr\left[ \sqrt{(\Pi_A \otimes \sqrt{\omega_B}) \tilde{\rho}_{AB} (\Pi_A \otimes \sqrt{\omega_B})} \right]^2\\
		&= \tr\left[ \sqrt{(\ident_A \otimes \sqrt{\omega_B}) \Pi_A \tilde{\rho}_{AB} \Pi_A (\ident_A \otimes \sqrt{\omega_B})} \right]^2\\
		&= F(\Pi_A \tilde{\rho}_{AB} \Pi_A, \ident_A \otimes \omega_B)^2\\
		&= F(\tilde{\rho}_{AB}', \ident_A \otimes \omega_B)^2\\
		&= 2^{H_{\max}(A|B)_{\tilde{\rho}'}}\\
		&\geqslant 2^{H^{\varepsilon}_{\max}(A|B)_{\rho'}}.
	\end{align*}
	Taking logarithms then yields the lemma.
\end{IEEEproof}

\section{The method of types}\label{sec:method-of-types}
The ``method of types'' is a technique that is used extensively in classical information theory and that we need here to make statements about discrete memoryless channels. For a complete introduction to this method, we refer the reader to \cite{method-of-types}; we will only give here the facts needed for our paper. The basic idea goes as follows. Let $\mfX$ be a finite set, and let $\vec{x} = x_1 \dots x_n \in \mfX^n$ be a sequence of $n$ symbols from $\mfX$. For any $x \in \mfX$, let $p_{\vec{x}}(x)$ be the relative frequency of the symbol $x$ in $\vec{x}$ (i.e. the number of occurences of $x$ in $\vec{x}$ divided by $n$). We call the distribution $p_{\vec{x}}$ the \emph{type} of $\vec{x}$, and, given a type $p$, we define $t(p)$ to be the set of all sequences of type $p$. Furthermore, we define $\mathcal{P}_n(\mfX)$ to be the set of all types over $\mfX^n$.

We now list some basic properties of types:
\begin{itemize}
	\item $|\mathcal{P}_n(\mfX)| = { n + |\mfX| - 1 \choose |\mfX| - 1}$.
	\item For any type $p \in \mathcal{P}_n(\mfX)$, we have that
		\[ |\mathcal{P}_n(\mfX)|^{-1} 2^{nH(p)} \leqslant |t(p)| \leqslant 2^{nH(p)}. \]
	\item For any type $p \in \mathcal{P}_n(\mfX)$ and any probability distribution $q$ over $\mfX$, we have that
		\[ |\mathcal{P}_n(\mfX)|^{-1} 2^{-nD(p \| q)} \leqslant \sum_{\vec{x} \in t(p)} q^n(\vec{x}) \leqslant 2^{-nD(p \| q)}. \]
	\item For any probability distribution $q$ and any $n$, the most likely type $p$ has total probability
		\[ \sum_{\vec{x} \in t(p)} q^n(\vec{x}) \geqslant |\mathcal{P}_n(\mfX)|^{-1}. \]
\end{itemize}
Note that $|\mathcal{P}_n(\mfX)|$ is polynomial in $n$ and becomes negligible in most expressions involving exponentials of entropies.

To use these concepts in quantum information, we will define \emph{type projectors}. Let $X$ be a $|\mfX|$-dimensional quantum system, with a basis vector $\ket{x}$ for each $x \in \mfX$. Let $p \in \mathcal{P}_n(\mfX)$; we define the type projector $\Pi_{t(p)}$ as
\[ \Pi_{t(p)} = \sum_{\vec{x} \in t(p)} \proj{\vec{x}}{\vec{x}}, \]
where $\ket{\vec{x}} = \ket{x_1} \otimes \dots \otimes \ket{x_n}$ for $\vec{x} = x_1 \dots x_n$.

\section*{Acknowledgments}
We thank Renato Renner for discussions. FD acknowledges support from the Swiss National Science Foundation (grants PP00P2-128455), the National Centre of Competence in Research ``Quantum Science and Technology'') and the German Science Foundation (grants \mbox{CH~843/1-1} and \mbox{CH~843/2-1}). OS acknowledges funding by the Elite Network of Bavaria (ENB) project QCCC.
MT acknowledges support from the National Research Foundation (Singapore), and the Ministry of Education (Singapore).

\bibliographystyle{abbrv}
\bibliography{refs,big}

\end{document}